\def\Pe{{\bar{ \text {Pe}}}}
\newcommand{\indc}{\mathbf{1}}
 \DeclareMathOperator*{\argmax}{arg\,max}
 \DeclareMathOperator*{\argmin}{arg\,min}
\newtheorem{theorem}{Theorem}
\newtheorem{lemma}{Lemma}
\newtheorem{proposition}{Proposition}
\newtheorem{corollary}{Corollary}
\newtheorem{assumption}{Assumption}
\newtheorem{fact}{Fact}
\theoremstyle{definition}
\newtheorem*{definition} {Definition}
\newtheorem{remarks}{Remark}
\newfont{\boldlarge}{msbm10 scaled 1100}
\newcommand{\Expt}{\mbox{\boldlarge E}}
\newcommand{\ignore}[1]{}
\renewcommand{\qed}{\nobreak \ifvmode \relax \else
      \ifdim\lastskip<1.5em \hskip-\lastskip
      \hskip1.5em plus0em minus0.5em \fi \nobreak
      \vrule height0.2em width0.5em depth0.4em\fi}
\begin{document}

\title{Sequentiality and Adaptivity Gains in\\ Active Hypothesis Testing}
\author{Mohammad Naghshvar and Tara Javidi
\thanks{This work was supported in part by the industrial sponsors of UCSD Center for Wireless Communication (CWC) and Center for Networked Systems (CNS), and NSF Grants CNS-0533035 and CCF-0729060.

The authors are with the Department of Electrical and Computer Engineering, University of California San Diego, La Jolla, CA 92093 USA. (e-mail: naghshvar@ucsd.edu; tjavidi@ucsd.edu).}
}

\maketitle

\begin{abstract}
Consider a decision maker who is responsible to collect observations so as to enhance his
information in a speedy manner about an underlying phenomena of interest.
The policies under which the decision maker selects sensing actions 
can be categorized based on the following two factors:
i) sequential vs. non-sequential; ii) adaptive vs. non-adaptive.
Non-sequential policies collect a fixed number of observation samples and make the final decision afterwards;
while under sequential policies, the sample size is not known initially and is determined by the observation outcomes.
Under adaptive policies, the decision maker relies on the previous collected samples to select the next sensing action;
while under non-adaptive policies, the actions are selected independent of the past observation outcomes.

In this paper, performance bounds are provided for the policies in each category.
Using these bounds, \emph{sequentiality gain} and \emph{adaptivity gain}, 
i.e., the gains of sequential and adaptive selection of actions are characterized.
\end{abstract}


\begin{IEEEkeywords}
Active hypothesis testing, performance bounds, feedback gain, error exponent. 
\end{IEEEkeywords}

\section{Introduction}

\thispagestyle{empty}

\newcommand{\tj}[1]{{\color{red}{(#1)}}}
\newcommand{\mn}[1]{{\color{blue}{#1}}}

This paper considers a generalization of the classical hypothesis testing problem. 
Suppose there are $M$ hypotheses among which only one is true. 
A Bayesian decision maker is responsible to enhance his information 
about the correct hypothesis in a speedy manner with a small number of samples
while accounting for the penalty of wrong declaration. 
In contrast to the classical \mbox{$M$-ary} hypothesis testing problem, at any given time,
our decision maker can choose one of $K$ available actions and hence, exert some control over the collected sample's ``information content.'' We refer to this generalization, originally tackled by Chernoff~\cite{Chernoff59}, as the \emph{active} hypothesis testing problem.
%
The special cases of active hypothesis testing naturally arise in a broad spectrum of applications
in cognition~\cite{Shenoy11}, communications~\cite{Burnashev76}, anomaly detection~\cite{ubli1}, image inspection~\cite{ubli2}, generalized search~\cite{Nowak11IT}, group testing~\cite{saligrama}, and sensor management~\cite{Hero11}.

The sample size and the sensing actions can be selected either 
based on the past observation outcomes (on-line)
or independent from them (off-line or open loop). According to this fact, the solutions are
divided into four categories based on the following two factors:
i) sequential vs. non-sequential; ii) adaptive vs. non-adaptive.
Non-sequential schemes collect a fixed number of observation samples and make the final decision afterwards; 
while under sequential ones, the sample size is not set in advance and instead is determined by the specific observations made. 
Under adaptive policies, the decision maker relies on the previous 
collected samples to select the next sensing action; while under 
non-adaptive policies, the actions are selected independent of the
past observation outcomes. 
A question of both theoretical and practical significance is the characterization of the benefits of making sequential 
and adaptive decisions relative to the non-sequential and non-adaptive solutions.

Due to the importance of the question, such gains have been characterized
for many special cases of the active hypothesis testing~\cite{ubli2, Nowak11, Iwen09}. 
For instance, in~\cite{Nowak11} and~\cite{Iwen09}, simple sequential and adaptive high dimensional reconstruction and sparse recovery are shown to significantly outperform the performance of the best non-sequential non-adaptive solutions. In contrast,  \cite{ubli2} identifies scenarios where the gain in practice is insignificant. In this paper, we consider the problem of active hypothesis testing in 
its full generality and provide upper and lower bounds on the expected cost of the optimal sensing selection strategies in sequential and non-sequential as well as adaptive and non-adaptive classes of policies.  Furthermore, the bounds are shown to be asymptotically tight (in terms of number of samples or equivalently in terms of reliability) and logarithmically increasing in the penalty of wrong declaration (or equivalently the error probability). 

As simple corollaries, we provide a full characterization of the sequentiality and adaptivity gains in the general active hypothesis testing framework. These findings generalize and extend those of~\cite{Nowak11} and~\cite{Iwen09} by showing
a logarithmic sequentiality gain in all cases and an additional logarithmic adaptivity gain in a large class of practically relevant cases.
Furthermore, the results prove, as a corollary, the conjecture given in~\cite{ubli2} on the insignificance of adaptivity gain when there exists a ``most informative'' sensing action which is independent of the Bayesian prior. 
Finally, we specialize our results in the active binary hypothesis testing case and state a simple 
 necessary and sufficient condition for a logarithmic adaptivity gain. 
 
  This work and analysis is closely related and complimentary to a growing body of literature on hypothesis testing \cite{Wald48,Armitage50,Lorden77,Chernoff59,Blahut74,Tuncel05,Hayashi09,PolyanskiyITA2011,NitinawaratArxiv,NitinawaratICASSP12}. We discuss the specific contributions and connections in Subsection~\ref{sec:survey}.

The remainder of this paper is organized as follows. 
In Section~\ref{sec:PS}, we formulate the problem and define various types of policies for selecting actions.
Sections~\ref{sec:main} and~\ref{sec:cons} provide the main results of the paper and 
discusses the advantage of sequential and adaptive selection of actions.
In Section~\ref{sec:binary}, active binary hypothesis testing is investigated as a special case
and a necessary and sufficient condition for a logarithmic adaptivity gain is provided.
Finally, we conclude the paper and discuss future work in Section~\ref{Discussion}.

\underline{Notations}:
A random variable is 
denoted by an upper case letter (e.g. $X$) and its realization is denoted by a lower 
case letter (e.g. $x$). 
For any set $\mathcal{S}$, $\left| \mathcal{S} \right|$ denotes the cardinality of $\mathcal{S}$.
For a set $\mathcal{A}$, let $\Lambda(\mathcal{A})$ denote the collection of all probability distributions on
elements of $\mathcal{A}$, i.e., $\Lambda(\mathcal{A}) = \{ \boldsymbol{\lambda} \in [0,1]^{|\mathcal{A}|}: \sum_{a \in \mathcal{A}} \lambda_a = 1 \}$.
%
%
The \emph{Kullback-Leibler (KL) divergence} between two probability density functions $q(\cdot)$ and $q'(\cdot)$ on space $\mathcal{Z}$ is
defined as $D(q||q')=\int_{\mathcal{Z}} q(z) \log\frac{q(z)}{q'(z)} dz$,
with the convention $0 \log \frac{a}{0}=0$ and $b \log \frac{b}{0}=\infty$ for $a,b\in [0,1]$ with $b\neq 0$.
The \emph{R\'enyi divergence of order $\alpha$}, $\alpha\in [0,1]$, between two probability density functions $q(\cdot)$ and $q'(\cdot)$ on space $\mathcal{Z}$ is denoted by $D_{\alpha}(q||q')$ where $D_{\alpha}(q||q')=\frac{-1}{1-\alpha} \log \int_{\mathcal{Z}} q^{\alpha}(z) {q'}^{1-\alpha}(z) dz$ for $\alpha\in [0,1)$ 
and $D_{\alpha}(q||q')=D(q||q')$ for $\alpha=1$. 
Finally, let $N(m,\sigma^2)$ denote a normal distribution with mean $m$ and variance $\sigma^2$. 

\section{Problem Setup}
\label{sec:PS}
In Subsection~\ref{sec:PF}, we formulate the problem of active hypothesis testing. 
Subsection~\ref{sec:Type} discusses different types of policies for selecting actions.
Subsection~\ref{sec:InfState} explains why active hypothesis testing is a partially observable Markov decision problem (POMDP) and provides the sufficient statistic for this problem.
Finally, in Subsection~\ref{sec:survey}, 
we state the main contributions of the paper and provide a summary of related works.

\subsection{Problem Formulation}
\label{sec:PF}

Here, we provide a precise formulation for the active \mbox{$M$-ary} hypothesis testing problem.

Let $\Omega=\{1,2,\ldots,M\}.$ 
Let $H_i$, $i \in \Omega$, denote $M$ hypotheses of interest among which only one holds true.  
 Let~$\theta$ be the random variable that takes the value $\theta=i$ on the event that $H_i$ is true for $i \in \Omega$. 
 We consider a Bayesian scenario with a given prior (belief) about~$\theta$,
 i.e., initially $P(\{\theta=i\})=\rho_i(0)>0$ for all $i \in \Omega$.
 $\mathcal{A}$~is the set of all sensing actions and is assumed to be finite with $|\mathcal{A}|=K < \infty$.
 $\mathcal{Z}$~is the \emph{observation space}. 
 For all $a \in \mathcal{A}$, the observation kernel $q^a_i(\cdot)$ (on $\mathcal{Z}$) 
 is the probability density function for observation~$Z$ when action $a$ has been taken and $H_i$ is true.
 We assume that observation kernels $\{q^a_i(\cdot)\}_{i,a}$ are known.
 Let~$L$ denote the penalty for a wrong declaration, i.e.,~the penalty of selecting $H_j$, $j \neq i$, when $H_i$ is true. 
 Let $\tau$ be the (stopping) time at which the decision maker retires.
 The objective is to find a stopping time $\tau$, 
 a sequence of sensing actions $A(0), A(1), \ldots, A({\tau-1})$, and a declaration rule 
 $d:~\mathcal{A}^{\tau} \times \mathcal{Z}^{\tau} \to \Omega$ that collectively minimize the expected total cost
	\begin{align}
	\label{Obj1}
	\Expt \left[ \tau + L {\bf{1}}_{\{d(A^{\tau},Z^{\tau}) \neq \theta\}} \right],
	\end{align}
 where the expectation is taken with respect to the initial belief as well as the distribution of observation sequence. 

Note that in the above problem, the cost of a test is stated in terms of minimizing the expected sample size plus the expected penalty of wrong declaration. We are interested in the characterization of this cost as a function of penalty $L$. 
 It is easy to show that under the optimal selection rule, the probability of error approaches zero as $L$ approaches infinity. 
 Furthermore, as shown in~\cite{HypJournal}, 
 the above problem is (asymptotically) equivalent to the problem of minimizing the (expected) number of samples 
 subject to a constraint $\epsilon=(L \log L)^{-1}$ on the expected probability of error.

\subsection{Types of Policies}
\label{sec:Type}

A \emph{policy} is a rule based on which stopping time $\tau$ and 
sensing actions $A(t)$, $t=0,1,\ldots,\tau-1$ 
are selected. We assume that sensing actions are selected 
according to randomized decision $\boldsymbol{\lambda} \in \Lambda(\mathcal{A})$
whose element $\lambda_a$ indicates the probability of selecting sensing action $a$ and 
in general might change with time or not. 
The sensing actions and the stopping time can be selected either 
based on the past observation outcomes
or independent from them.
According to this fact, policies are
divided into four categories based on the following two factors:
i) sequential vs. non-sequential; ii) adaptive vs. non-adaptive.
Non-sequential policies collect a fixed number of observation samples and make the final decision afterwards; 
while under sequential policies, the sample size is not known initially and is determined by the observation outcomes. 
More precisely, under non-sequential policies, $\tau=N$ for some $N \in \mathbb{N}$; 
while for sequential policies, $\tau$ is a random stopping time.   
Under adaptive policies, the decision maker relies on the previous 
collected samples to select the next sensing action; while under 
non-adaptive policies, the actions are selected independent of the
past observation outcomes.
%






\subsection{Information State as Sufficient Statistic}
\label{sec:InfState}

The problem of active \mbox{$M$-ary} hypothesis testing is a 
partially observable Markov decision problem (POMDP) 
where the state is static and observations are noisy. 
It is known that any POMDP is equivalent to an MDP 
with a compact yet uncountable state space, for which 
the belief of the decision maker about the 
underlying state becomes an information state~\cite{Kumar}.
In our setup, thus, the information state at time~$t$ is 
nothing but a belief vector specified by the conditional probability
of hypotheses $H_1, H_2, \ldots, H_{M}$ to be true 
given the initial belief and all the previous observations and actions.
Let $\boldsymbol{\rho}(t)$ denote the posterior belief after $t$ observations.
Accordingly, the information state space is defined as 
$\mathbb{P}(\Theta) = \big\{  {\boldsymbol{\rho}} \in [0,1]^M: \sum_{i=1}^{M} \rho_i = 1 \big\}$ 
where $\Theta$ is the \mbox{$\sigma$-algebra} generated by random variable $\theta$.
In one sensing step, the evolution of the belief vector follows Bayes' rule and 
the expected total cost (\ref{Obj1}) can be rewritten as
\begin{align}
\label{Obj2}
\Expt \left[ \tau \right]  + L \Pe,
\end{align}
where $\Pe = \Expt [1-\max_{j \in \Omega} \rho_j(\tau)]$ is the probability of wrong declaration
and the expectations are taken with respect to the distribution of observation sequence as well as
the prior distribution on $\theta$.

Let $V_{NN}(\boldsymbol{\rho})$, $V_{SN}(\boldsymbol{\rho})$, $V_{SA}(\boldsymbol{\rho})$, and $V_{NA}(\boldsymbol{\rho})$, 
denote the minimum expected total cost (\ref{Obj2}) for prior belief~$\boldsymbol{\rho}$ under
non-sequential non-adaptive, sequential non-adaptive, sequential adaptive, and non-sequential adaptive policies, respectively.

\subsection{Overview of the Results and Literature Survey}
\label{sec:survey}

Active hypothesis testing generalizes the passive (classical) hypothesis testing problem 
where the number of sensing actions is limited to one, both
in the fixed sample size (non-sequential) case \cite{Blahut74, Tuncel05, Haroutunian07} 
as well as the sequential one~\cite{Wald48, Armitage50, Lorden77}.  While the 
fixed sample size studies have primarily focused on the asymptotic analysis in form of 
identifying error exponents for various error types \cite{Blahut74, Tuncel05, Haroutunian07}, the 
study of sequential hypothesis testing has come in form of identifying the expected optimal 
sample size to achieve a given error probability.

The generalization to the active testing case was considered by Chernoff in~\cite{Chernoff59}  
in which a decision maker controls sensing actions to optimize the expected total cost~\eqref{Obj1} in a sequential (variable sample size)
setting.   
In particular, in~\cite{Chernoff59} and its extensions \cite{ISIT11, HypJournal, NitinawaratArxiv}, heuristic sequential adaptive randomized policies were proposed and were shown to be asymptotically optimal as $L \to \infty$ where the notion of asymptotic optimality~\cite{Chernoff59}
denotes the relative tightness of the performance upper bound associated with the
proposed policy and the lower bound associated with the optimal policy.\footnote{In~\cite{Chernoff59}, 
the objective was to minimize $c \mathbb{E}[\tau] + \Pe$ and 
the proposed policy was shown to be asymptotically optimal as $c\to 0$.
It is straightforward to show that for $L=\frac{1}{c}$, this problem coincides with the active hypothesis testing problem defined in this paper.
However, we have chosen $\mathbb{E}[\tau]+L \Pe$ as an objective function here
because of its Lagrangian relaxation interpretation of an information acquisition problem 
in which the objective is to minimize $\mathbb{E}[\tau]$ subject to $\Pe\le \epsilon$ 
where $\epsilon >0$ denotes the desired probability of error.} 

 The general active binary hypothesis testing problem was recently studied in \cite{Hayashi09,PolyanskiyITA2011} where full characterization of the error exponent corresponding to the class of adaptive and non-adaptive policies was provided.
In particular, the error exponent corresponding to these two classes was shown to be equal, hence establishing zero adaptivity gain among non-sequential policies. 
The generalization to $M>2$ was considered in \cite{NitinawaratArxiv}. 
Note that while \cite{NitinawaratArxiv} fully characterizes the error exponent corresponding to non-sequential non-adaptive policies; it provides only a partial characterization of (i.e., loose upper and lower bounds on) the error exponent corresponding to non-sequential adaptive policies.

 Table~\ref{tbl:HypLit} provides a visual summary of the literature on hypothesis testing, excluding the 
 authors' prior work, as discussed above. 

\begin{table}[htp]
\center
\caption{Hypothesis Testing Literature}
\label{tbl:HypLit}
\begin{tabular}{ccc}
  \toprule
  Type & $M=2$ & $M>2$ \\
  \midrule
  \vspace{0.05 in} 
  Sequential Passive ($K=1$)  &  \cite{Wald48} & \cite{Armitage50, Lorden77} \\
  \midrule
  \vspace{0.05 in} 
  Sequential Non-adaptive &  \cite{PolyanskiyITA2011} &  \\
  \midrule
  \vspace{0.05 in} 
  Sequential Adaptive  &  \cite{Chernoff59,PolyanskiyITA2011} & \cite{Chernoff59,NitinawaratArxiv} \\
  \midrule
  \vspace{0.05 in} 
  Non-sequential Passive ($K=1$) &  \cite{Blahut74} & \cite{Tuncel05} \\
  \midrule
  \vspace{0.05 in} 
  Non-sequential Non-adaptive  &  \cite{Hayashi09,PolyanskiyITA2011} & \cite{NitinawaratArxiv}  \\
  \midrule
  \vspace{0.05 in} 
  Non-sequential Adaptive &  \cite{Hayashi09,PolyanskiyITA2011} &  \cite{NitinawaratArxiv}  \\
  \bottomrule
\end{tabular}
\end{table}

We close our literature survey with an overview of the main contributions of this paper, which expands 
our previous works \cite{Allerton10, ISIT11, CISS2012, HypJournal}  and unifies various aspects of the prior work:
\begin{itemize}
	\item We provide asymptotically tight lower and upper bounds on $V_{NN}(\boldsymbol{\rho})$, $V_{SN}(\boldsymbol{\rho})$, and $V_{SA}(\boldsymbol{\rho})$ which hold uniformly for all prior $\boldsymbol{\rho}\in\mathbb{P}(\Theta)$.
	\begin{itemize}
		\item The asymptotic tight bounds on $V_{NN}(\boldsymbol{\rho})$ relies on the analysis of \cite{Blahut74, Tuncel05} 
		and the realization that in order to minimize the total cost, we have to decrease the error probabilities of various types
			with the same exponent among the worst pair of hypotheses. Since unlike the passive case studied in
			 \cite{Blahut74, Tuncel05}, the non-adaptive policies produce non-iid observation samples, 
			the final step is to characterize the relationship between the error exponent of a fixed block length 
			and one-step error exponent.
		\item The asymptotic tight bounds  on $V_{SN}(\boldsymbol{\rho})$ extend the results obtained by \cite{Lorden77} to 
			the Bayesian context while allowing for randomized non-adaptive policies. More specifically, the result of
			 \cite{Lorden77} is obtained via the law of large numbers and only holds if the observations are i.i.d. 
			 Since observations are not identical (although they are independent), different proof technique is required (note
			 that unlike the non-sequential case of extending the work of~\cite{Blahut74, Tuncel05}, the random nature  of 
			 sample size in the sequential case does not allow for a predetermined relationship between 
			 the error exponent of a fixed block and one-step error exponent).
		\item The asymptotic tight bounds on $V_{SA}(\boldsymbol{\rho})$ extend those obtained by Chernoff~\cite{Chernoff59}
		to the Bayesian context while relaxing the assumption on uniform discrimination of hypotheses or the need for the infinitely 
		often reliance on randomized action deployed in~\cite{NitinawaratArxiv} to ensure 
		sufficient discrimination among hypotheses.
	\end{itemize}
	\item In addition, we partially characterize a lower bound for $V_{NA}(\boldsymbol{\rho})$.  This is, in the
	Bayesian context, similar to the partial characterization of error exponent of~\cite{NitinawaratArxiv}. 
	\item As corollaries to the above performance bounds, we characterize the sequentiality gain and adaptivity gain in terms of $L$. In particular, it is shown that the sequentiality gain grows logarithmically as the penalty $L$ increases.  We also state a simple necessary and sufficient condition ensuring a logarithmic adaptivity gain in $L$ for the active binary hypothesis testing case.
	\item Furthermore, primarily as a sanity check, Section~\ref{sec:exp} contains the maximum achievable error exponents 
	$E_{NN}$, $E_{SN}$, and $E_{SA}$ in the Bayesian context. In particular, our result regarding $E_{NN}$ coincides with that of \cite{Hayashi09, PolyanskiyITA2011, NitinawaratArxiv}; while  the result regarding $E_{SA}$ coincides with that of \cite{Chernoff59, NitinawaratArxiv} in the Bayesian context. To the best of our knowledge, the result on $E_{SN}$ is new and has not been established before; while our upper bound on $E_{NA}$ is subsumed by the analysis in~\cite{NitinawaratArxiv}.
	
\end{itemize}

\section{Analytic Results}
\label{sec:main}

In this section, we provide the main results of the paper regarding the 
asymptotic characterization (in $L$) of 
$V_{NN}(\boldsymbol{\rho})$, $V_{SN}(\boldsymbol{\rho})$, 
$V_{SA}(\boldsymbol{\rho})$, and $V_{NA}(\boldsymbol{\rho})$. 

\subsection{Assumptions and Basic Definitions}
Throughout the paper, we have the following technical Assumptions.

\begin{assumption}
\label{KL0}
For any two hypotheses $i$ and $j$, $i \neq j$, there exists an action $a$, $a \in \mathcal{A}$, such that $D(q^a_i || q^a_j) > 0$.
\end{assumption}

\begin{assumption}
\label{Jump}
There exists $\xi < \infty$ such that
$$\max \limits_{i,j\in \Omega} \max \limits_{a \in \mathcal{A}} \sup \limits_{z \in \mathcal{Z}} \frac{q^a_i(z)}{q^a_j(z)} \le \xi.$$ 
\end{assumption}

Assumption~\ref{KL0} ensures the possibility of discrimination between any two hypotheses.
Assumption~\ref{Jump} implies that no two hypotheses are fully distinguishable using a single observation sample.

To continue with our analysis, we need the following definitions and notations.

\begin{definition}
For all $i \in \Omega$, $\boldsymbol{\lambda} \in \Lambda(\mathcal{A})$,
the \emph{optimized discrimination} of hypothesis~$i$ under randomized rule~$\boldsymbol{\lambda}$
is defined as
\begin{align*}
D^* (i,\boldsymbol{\lambda}):=\min \limits_{j \neq i} \max \limits_{\alpha \in [0,1]} (1-\alpha) \sum \limits_{a \in \mathcal{A}} \lambda_a  D_{\alpha} (q_i^a ||q_j^a).
\end{align*}
\end{definition}

\begin{definition}
For all $i \in \Omega$, $\boldsymbol{\lambda} \in \Lambda(\mathcal{A})$,
the \emph{reliability} function of hypothesis~$i$ with regard to randomized rule~$\boldsymbol{\lambda}$
is defined as
\begin{align*}
R(i,\boldsymbol{\lambda}):=\min \limits_{j \neq i} \sum \limits_{a \in \mathcal{A}} \lambda_a D(q^a_i||q^a_j),
\end{align*}
and the maximal randomized rule for hypothesis~$i$ is denoted by
\begin{align*}
\boldsymbol{\lambda}^*_i := \argmax_{\boldsymbol{\lambda} \in \Lambda(\mathcal{A})} R(i,\boldsymbol{\lambda}).
\end{align*}

For $\boldsymbol{\lambda} \in \Lambda(\mathcal{A})$, let $\bar{R}(\boldsymbol{\lambda})$ denote the harmonic mean of $\{R(i,\boldsymbol{\lambda})\}_{i \in \Omega}$, i.e.,
\begin{align*}
\bar{R}(\boldsymbol{\lambda}): = \frac{M}{\sum_{i=1}^M \frac{1}{R(i,\boldsymbol{\lambda})}},
\end{align*}
and let $\bar{R}^*$ denote the harmonic mean of $\{R(i,\boldsymbol{\lambda}^*_i)\}_{i \in \Omega}$, i.e.,
\begin{align*}
\bar{R}^*: = \frac{M}{\sum_{i=1}^M \frac{1}{R(i,\boldsymbol{\lambda}^*_i)}}.
\end{align*}
%
%
\end{definition}

These notions of discrimination and reliability, as we will see, are natural (and Bayesian) extensions of reliability in classical detection~\cite{Haroutunian07}
where reliability function for hypothesis~$i$ is related to type~$i$ error probability. The following fact 
enables a concrete relationship between these notions. 

\begin{fact}[Theorem~1 in~\cite{Shayevitz11ISIT}]
\label{RenyiKL}
For two probability density functions $q(\cdot)$ and $q'(\cdot)$ with the same support and for all $\alpha \in [0,1]$ we have 
\begin{align*}
(1-\alpha) D_{\alpha} (q ||q') \le \min \left\{ (1-\alpha) D (q ||q'), \alpha D(q'||q) \right\}.
\end{align*}
\end{fact}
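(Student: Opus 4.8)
The plan is to peel off the definition of the R\'enyi divergence and reduce the statement to two one-line applications of Jensen's inequality. For $\alpha \in [0,1)$ the definition gives $(1-\alpha) D_{\alpha}(q\|q') = -\log \int_{\mathcal{Z}} q^{\alpha}(z)\, q'^{1-\alpha}(z)\, dz$, and by H\"older's inequality this integral lies in $(0,1]$, so the logarithm is well defined. The endpoints are immediate: at $\alpha = 1$ both sides equal $0$, and at $\alpha = 0$ the common-support hypothesis forces $\int_{\mathcal{Z}} q'(z)\,dz = 1$, so again both sides are $0$. Hence I fix $\alpha \in (0,1)$ for the rest of the argument.

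First I would write the integrand as an expectation with respect to $q$, namely $\int_{\mathcal{Z}} q^{\alpha} q'^{1-\alpha}\,dz = \mathbb{E}_q\big[(q'/q)^{1-\alpha}\big]$, where the common support makes the ratio well defined $q$-almost everywhere. Since $\log$ is concave, Jensen's inequality yields $\log \mathbb{E}_q\big[(q'/q)^{1-\alpha}\big] \ge (1-\alpha)\,\mathbb{E}_q\big[\log(q'/q)\big] = -(1-\alpha) D(q\|q')$; multiplying by $-1$ gives $(1-\alpha) D_{\alpha}(q\|q') \le (1-\alpha) D(q\|q')$. Repeating the argument with the $q'$-representation $\int_{\mathcal{Z}} q^{\alpha} q'^{1-\alpha}\,dz = \mathbb{E}_{q'}\big[(q/q')^{\alpha}\big]$ and the same concavity step gives $\log \mathbb{E}_{q'}\big[(q/q')^{\alpha}\big] \ge \alpha\, \mathbb{E}_{q'}\big[\log(q/q')\big] = -\alpha D(q'\|q)$, hence $(1-\alpha) D_{\alpha}(q\|q') \le \alpha D(q'\|q)$. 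Taking the minimum of the two bounds proves the claim.

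There is no genuinely hard step here; the only points needing a little care are the interchange of $\log$ and expectation (the inner quantities $\mathbb{E}_q[\log(q'/q)]$ and $\mathbb{E}_{q'}[\log(q/q')]$ may a priori equal $-\infty$, in which case the corresponding bound is vacuously true and still valid) and the bookkeeping at the endpoints $\alpha \in \{0,1\}$ under the convention $0\log(a/0)=0$. In the applications of interest Assumption~\ref{Jump} makes both KL divergences finite, so even this subtlety disappears.
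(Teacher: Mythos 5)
Your proof is correct. Note, however, that the paper does not prove this statement at all: it is imported verbatim as Theorem~1 of the cited reference and used as a black box, so there is no in-paper argument to compare against. Your derivation is the standard self-contained one: writing $-\log\int q^{\alpha}q'^{1-\alpha}$ once as $-\log\mathbb{E}_q\big[(q'/q)^{1-\alpha}\big]$ and once as $-\log\mathbb{E}_{q'}\big[(q/q')^{\alpha}\big]$ and applying Jensen to the concave logarithm in each representation is exactly the right reduction, and your handling of the edge cases is sound --- the H\"older bound guarantees the integral lies in $(0,1]$ so the logarithm is finite, the inequalities hold vacuously when the relevant KL divergence is $+\infty$, and the endpoints $\alpha\in\{0,1\}$ reduce to $0\le 0$ under the paper's conventions (for $\alpha=1$ the right-hand side is $\min\{0,\,D(q'\|q)\}=0$ by nonnegativity of KL). What your elementary argument buys over the citation is that the Fact becomes independent of the external reference and transparently valid under exactly the common-support hypothesis stated; no gap to report.
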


\ignore{Let $V_{NN}$, $V_{SN}$, and $V_{SA}$, 
denote the minimum expected total cost respectively under
non-sequential non-adaptive, sequential non-adaptive, and sequential adaptive policies.
Next we state the main results of the paper, i.e., 
performance bounds on the minimum expected total cost 
under the policies mentioned above. 
The bounds presented here are based on a uniform prior belief, i.e., $\rho(0)=[1/M,\ldots,1/M]$
Generalizations of the bounds for arbitrary prior beliefs are proved in the appendix.}

\subsection{Main Theorems}

In this subsection, we provide upper and lower bounds on the 
minimum expected total cost (\ref{Obj1}) under different types of policies
defined in Subsection~\ref{sec:Type}.
These bounds will be used then in Section~\ref{sec:cons} to characterize the 
gains of sequential and adaptive selection of actions.

\begin{theorem}[Non-sequential non-adaptive policy]
\label{VNN}
Under Assumptions~\ref{KL0} and \ref{Jump},
\begin{align}
V_{NN} (\boldsymbol{\rho}) &\le 
\frac{\log L - \min \limits_{i,j \in \Omega} \log\frac{\rho_i}{\rho_j}}
{\hat{D}} + o(\log L),\\
\label{VNN01}
V_{NN} (\boldsymbol{\rho}) &\ge 
\frac{\log L - \max \limits_{i,j \in \Omega} \log\frac{\rho_i}{\rho_j}}
{\hat{D}} - o(\log L),
\end{align} 
where
\begin{align}
\label{D} 
\hat{D} : = \max \limits_{\boldsymbol{\lambda} \in \Lambda(\mathcal{A})} \min \limits_{i \in \Omega} D^* (i,\boldsymbol{\lambda}).
\end{align}
\end{theorem}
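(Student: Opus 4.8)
The plan is to prove the two inequalities separately; each reduces to a Chernoff‑type large‑deviation estimate for the error probability of a length‑$N$ block generated by a fixed (non‑adaptive) action schedule, expressed through the \emph{worst pair} of hypotheses, followed by an elementary optimization of $N$ against the penalty $L$. Write $\mathrm{Ch}_{ij}(\boldsymbol{\lambda}):=\max_{\alpha\in[0,1]}(1-\alpha)\sum_{a}\lambda_a D_\alpha(q_i^a||q_j^a)$, so that $D^*(i,\boldsymbol{\lambda})=\min_{j\ne i}\mathrm{Ch}_{ij}(\boldsymbol{\lambda})$; the identity $(1-\alpha)D_\alpha(q||q')=\alpha D_{1-\alpha}(q'||q)$ gives $\mathrm{Ch}_{ij}=\mathrm{Ch}_{ji}$, hence $\hat{D}=\max_{\boldsymbol{\lambda}}\min_{i\ne j}\mathrm{Ch}_{ij}(\boldsymbol{\lambda})$. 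Since $\Lambda(\mathcal{A})$ is compact and Assumption~\ref{Jump} makes the divergences bounded and jointly continuous in $(\alpha,\boldsymbol{\lambda})$, the maximum in \eqref{D} is attained at some $\boldsymbol{\lambda}^*$ with $\min_i D^*(i,\boldsymbol{\lambda}^*)=\hat{D}$, and Assumption~\ref{KL0} (tested against the uniform $\boldsymbol{\lambda}$ with any $\alpha\in(0,1)$) yields $\hat{D}>0$. All prior‑dependent quantities that appear are constants in $L$ and are therefore absorbed into the $o(\log L)$ terms.

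\textbf{Achievability.} For a block length $N\in\mathbb{N}$ I would deploy the deterministic schedule that plays action $a$ exactly $\lceil N\lambda^*_a\rceil$ times (so the block has length at most $N+K$) and then declare $\argmax_j\rho_j(N)$ (the Bayes‑optimal rule). A union bound over the $M(M-1)$ ordered pairs, combined with the Chernoff estimate $\rho_i P_i(\text{declare }j)\le\rho_i^{1-s}\rho_j^{s}\prod_t\int_{\mathcal{Z}}(q_i^{a_t}(z))^{1-s}(q_j^{a_t}(z))^{s}\,dz$, valid for every $s\in[0,1]$, and grouping the product by action (using $\lceil N\lambda^*_a\rceil\ge N\lambda^*_a$ and that each integral is $\le1$), gives, when $s$ is the pairwise Chernoff‑optimal exponent, $\rho_i P_i(\text{declare }j)\le e^{-N\mathrm{Ch}_{ij}(\boldsymbol{\lambda}^*)}\le e^{-N\hat{D}}$ because $\mathrm{Ch}_{ij}(\boldsymbol{\lambda}^*)\ge D^*(i,\boldsymbol{\lambda}^*)\ge\hat{D}$. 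Hence $\Pe\le M(M-1)e^{-N\hat{D}}$; substituting into $V_{NN}(\boldsymbol{\rho})\le(N+K)+L\Pe$ and minimizing over $N$ yields $N^\star=\hat{D}^{-1}\log L+O(1)$ and $V_{NN}(\boldsymbol{\rho})\le\hat{D}^{-1}\log L+O(1)$, which implies the asserted upper bound.

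\textbf{Converse.} Take any non‑sequential non‑adaptive policy. Conditioning on its (possibly randomized) action sequence we obtain a deterministic schedule of some length $N$ with empirical frequency $\boldsymbol{\lambda}$ and a declaration rule $d$. Since $\min_{i\ne j}\mathrm{Ch}_{ij}(\boldsymbol{\lambda})\le\hat{D}$ for \emph{every} $\boldsymbol{\lambda}$, fix a pair $i\ne j$ attaining this minimum. Writing $B=\{d=i\}$, we have $\Pe\ge\rho_i P_i(B^c)+\rho_j P_j(B)\ge\int\min\big(\rho_i\prod_tq_i^{a_t},\,\rho_j\prod_tq_j^{a_t}\big)$, i.e.\ $\Pe$ is at least the Bayes error of the binary test induced on $\{i,j\}$. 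A change‑of‑measure argument — tilting each (independent, non‑identically distributed) coordinate by the common pairwise Chernoff‑optimal exponent, using boundedness from Assumption~\ref{Jump} — lower‑bounds this Bayes error by $e^{-N(\hat{D}+\delta)}$ for any fixed $\delta>0$ once $N$ is large, and averaging over the policy's randomization preserves it. Therefore, for $N$ large $V_{NN}(\boldsymbol{\rho})\ge N+Le^{-N(\hat{D}+\delta)}\ge\min_{N'\ge0}\{N'+Le^{-(\hat{D}+\delta)N'}\}=(\hat{D}+\delta)^{-1}\log L+O(1)$, while for bounded $N$ the error probability is bounded below by a positive constant so $V_{NN}(\boldsymbol{\rho})\ge L\Pe\to\infty$; in all cases $V_{NN}(\boldsymbol{\rho})\ge(\hat{D}+\delta)^{-1}\log L-o(\log L)$, and $\delta\downarrow0$ gives the lower bound.

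\textbf{Main obstacle.} The delicate point is the tight, two‑sided relation between the exponent of a fixed‑length non‑adaptive block and the one‑step (frequency‑averaged) Chernoff information $\mathrm{Ch}_{ij}(\boldsymbol{\lambda})$: because the samples are independent but \emph{not} identically distributed, the i.i.d.\ error‑exponent results of \cite{Blahut74,Tuncel05} do not apply verbatim. The Chernoff (upper) side becomes exact after grouping samples by action, but the matching lower bound requires the tilting argument above, and the estimates must be made uniform in $\boldsymbol{\rho}$ with only $o(\log L)$ slack; pinning $\hat{D}$ down as the operative rate then rests on the max--min structure of \eqref{D} — a well‑chosen $\boldsymbol{\lambda}$ equalizes the worst pairwise exponent, while no $\boldsymbol{\lambda}$ can exceed $\hat{D}$ on every pair.
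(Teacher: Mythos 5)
Your proposal is correct and follows essentially the same route as the paper: a union bound over hypothesis pairs combined with a frequency-weighted Chernoff/R\'enyi bound on each pairwise error for achievability, a reduction to the worst binary sub-problem with a matching tilting (change-of-measure) lower bound for the converse, and a final optimization of the block length against $L$ — the paper packages the two-sided pairwise exponent estimate for independent but non-identically distributed samples into its Lemma~\ref{Blahut} (a non-i.i.d.\ extension of Blahut's Theorem~9), which is exactly the step you single out as the main obstacle. The only cosmetic differences are your derandomization of the action schedule (the paper keeps the randomized rule $\hat{\boldsymbol{\lambda}}$) and your absorbing of the prior-dependent terms $\log(\rho_i/\rho_j)$ into the $o(\log L)$ slack, which is harmless for the theorem as stated for a fixed $\boldsymbol{\rho}$ but discards the explicit prior dependence the paper tracks.
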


\begin{proof}
The detailed proof  is provided in Appendix~\ref{appendix:NN}. Here we provide an overview.

The proof of the lower bound relies on a generalization of Theorem~10 in~\cite{Blahut74}, while the upper bound
is achieved via a randomized, non-sequential, and non-adaptive policy which collects 
$
\hat{n}=\Big(\log L + \log (M-1) - \min \limits_{i,j \in \Omega} \log\frac{\rho_i}{\rho_j} + o(\log L) \Big)/\hat{D}
$
samples (deterministically) and selects sensing actions according to the randomization rule 
 $\hat{\boldsymbol{\lambda}} \in \Lambda(\mathcal{A})$ that 
achieves the maximum in (\ref{D}). 
\end{proof}

%


\begin{theorem}[Sequential non-adaptive policy]
\label{VSN}
Under Assumptions~\ref{KL0} and \ref{Jump}, 
\begin{align}
\label{VSN01-main}
V_{SN} (\boldsymbol{\rho}) &\le \min \limits_{\boldsymbol{\lambda} \in \Lambda(\mathcal{A})}
\sum_{i=1}^{M} \rho_i \frac{\log L - \min \limits_{k \neq i} \log\frac{\rho_i}{\rho_k}}{R(i,\boldsymbol{\lambda})} + o(\log L),\\
\label{VSN02-main}
V_{SN} (\boldsymbol{\rho}) &\ge \min \limits_{\boldsymbol{\lambda} \in \Lambda(\mathcal{A})}
\sum_{i=1}^{M} \rho_i \frac{\log L - \max \limits_{k \neq i} \log\frac{\rho_i}{\rho_k}}{R(i,\boldsymbol{\lambda})} - o(\log L).
\end{align}
\end{theorem}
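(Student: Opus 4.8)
The proof follows the classical two‑sided pattern of sequential testing (in the spirit of \cite{Wald48,Lorden77}), the only genuine complication being that a non‑adaptive rule may change with time, so the per‑step log‑likelihood increments are independent but not identically distributed; Wald's likelihood‑ratio identity and the data‑processing inequality still carry the argument, with the law of large numbers replaced by a uniform Hoeffding/Azuma concentration estimate that uses the bounded‑ratio Assumption~\ref{Jump}. Throughout, $\mathbb{E}_i,P_i$ denote conditioning on $H_i$ and $N_a(\tau)$ the number of times action $a$ is chosen before $\tau$.

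\emph{Upper bound \eqref{VSN01-main}.} I would analyze the explicit policy that fixes a minimizer $\tilde{\boldsymbol{\lambda}}$ of the right‑hand side, draws the action i.i.d.\ from $\tilde{\boldsymbol{\lambda}}$ at every step (so the pairs $(A(t),Z(t))$ are i.i.d.), stops the first time $\max_{j\in\Omega}\rho_j(t)\ge 1-1/L$, and declares the maximizer. Under $H_i$ the ratio $\rho_j(t)\rho_i(0)/(\rho_i(t)\rho_j(0))$ is a nonnegative $P_i$‑martingale of mean one, so the maximal inequality gives $P_i(d=j)=O(1/L)$ and hence $L\,\Pe=O(1)=o(\log L)$, uniformly in $\tilde{\boldsymbol{\lambda}}$. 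For the sample size, under $H_i$ each walk $\log(\rho_i(t)/\rho_j(t))$ has i.i.d.\ increments of mean $\sum_a\tilde\lambda_a D(q^a_i\|q^a_j)\ge R(i,\tilde{\boldsymbol{\lambda}})$ and steps bounded by $\log\xi$, so by Wald's identity (with overshoot control) and the concentration of the $M-1$ crossing times about their means, $\mathbb{E}_i[\tau]\le\bigl(\log L-\min_{k\neq i}\log(\rho_i/\rho_k)\bigr)/R(i,\tilde{\boldsymbol{\lambda}})+o(\log L)$. Averaging over $i$ with weights $\rho_i$ yields \eqref{VSN01-main}.

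\emph{Lower bound \eqref{VSN02-main}.} Take any sequential non‑adaptive policy; we may assume $\mathbb{E}[\tau]+L\,\Pe=O(\log L)$, whence $P_i(d\neq i)\le\Pe/\rho_i=O(\log L/L)$ for every $i$. Fix $i\neq j$. Since each action is drawn from a predetermined distribution, the likelihood ratio on the history $\mathcal{F}_\tau$ telescopes and Wald's identity (valid by Assumption~\ref{Jump}) gives
\[
D\!\left(P_i^{\mathcal{F}_\tau}\big\|P_j^{\mathcal{F}_\tau}\right)=\mathbb{E}_i\!\Big[\sum_{t<\tau}D(q^{A(t)}_i\|q^{A(t)}_j)\Big]=\mathbb{E}_i[\tau]\sum_a\hat\lambda^{(i)}_a\,D(q^a_i\|q^a_j),
\]
where $\hat{\boldsymbol{\lambda}}^{(i)}:=\bigl(\mathbb{E}_i[N_a(\tau)]/\mathbb{E}_i[\tau]\bigr)_a\in\Lambda(\mathcal{A})$ is the expected empirical selection frequency under $H_i$ — a convex combination of the predetermined rules $\{\boldsymbol{\lambda}(t)\}$, since $A(t)$ is independent of $\theta$ and of $\{\tau>t\}$. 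Feeding the partition $\{d=i\}$ versus its complement into the data‑processing inequality lower‑bounds the left side by the relative entropy between the $H_i$‑ and $H_j$‑laws of the indicator $\mathbf{1}_{\{d=i\}}$, which by $P_i(d\neq i)=o(1)$ and $P_j(d=i)\le\Pe/\rho_j$ is at least $\log L-o(\log L)$ uniformly in $j$ (tracking $\rho_i,\rho_j$ sharpens the constant to $\log L-\max_{k\neq i}\log(\rho_i/\rho_k)$); minimizing over $j\neq i$ gives $\mathbb{E}_i[\tau]\,R(i,\hat{\boldsymbol{\lambda}}^{(i)})\ge\log L-\max_{k\neq i}\log(\rho_i/\rho_k)-o(\log L)$.

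\emph{Main obstacle.} What remains — and, I expect, the crux — is to replace the $M$ hypothesis‑dependent rules $\hat{\boldsymbol{\lambda}}^{(i)}$ by a single common $\boldsymbol{\lambda}$, i.e.\ to prove $\sum_i\rho_i\mathbb{E}_i[\tau]\ge\min_{\boldsymbol{\lambda}\in\Lambda(\mathcal{A})}\sum_i\rho_i\bigl(\log L-\max_{k\neq i}\log(\rho_i/\rho_k)\bigr)/R(i,\boldsymbol{\lambda})-o(\log L)$, after which $\mathbb{E}[\tau]\le V_{SN}(\boldsymbol{\rho})$ concludes. This is exactly where non‑adaptivity must be used: an adaptive policy could have $\hat{\boldsymbol{\lambda}}^{(i)}=\boldsymbol{\lambda}^*_i$ and obtain only the weaker, $V_{SA}$‑type bound. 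The plan is to exploit that all $\hat{\boldsymbol{\lambda}}^{(i)}$ lie in $\mathrm{conv}\{\boldsymbol{\lambda}(t):t\ge 0\}$, that the survival weights $P_i(\tau>t)/\mathbb{E}_i[\tau]$ defining them are monotone in $t$ (hence nested once the hypotheses are ordered by $\mathbb{E}_i[\tau]$), and that $\boldsymbol{\lambda}\mapsto R(i,\boldsymbol{\lambda})$ is concave, positively homogeneous and coordinatewise non‑decreasing while $\boldsymbol{\lambda}\mapsto\sum_i\rho_i/R(i,\boldsymbol{\lambda})$ is convex; these structural facts together should force the achievable rate vector to be dominated by what a single rule can do. Everything outside this optimization step is a routine adaptation of the i.i.d.\ arguments of \cite{Lorden77} to the non‑identically‑distributed Bayesian setting.
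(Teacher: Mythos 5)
Your upper bound is essentially the paper's argument: fix the minimizing $\tilde{\boldsymbol{\lambda}}$, sample i.i.d.\ from it, stop when $\max_j\rho_j(t)\ge 1-L^{-1}$, control $L\,\Pe$ by the martingale property of the posterior-odds ratio, and control $\mathbb{E}_i[\tau]$ by bounding the pairwise crossing times with a bounded-increment concentration inequality (the paper uses McDiarmid via its Lemma~\ref{expdecay}; your Hoeffding/Azuma route is the same estimate). No issue there.

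The lower bound is where the proposal breaks. Your route (Wald's identity for the stopped log-likelihood ratio plus data processing against the decision indicator) is a legitimate and genuinely different mechanism from the paper's Chernoff-style decomposition of $\{\tau<T_i\}$ into ``all ratios already crossed'' (handled by Kolmogorov's maximal inequality, Lemma~\ref{B1event}) and ``some ratio still small at stopping'' (handled by change of measure, Lemma~\ref{B2event}), and for a fixed randomization rule it delivers the same per-hypothesis inequality $\mathbb{E}_i[\tau]\,R(i,\boldsymbol{\lambda})\ge\log L-\max_{k\neq i}\log\frac{\rho_i}{\rho_k}-o(\log L)$. But the step you yourself flag as the crux --- replacing the $M$ hypothesis-dependent mixtures $\hat{\boldsymbol{\lambda}}^{(i)}$ by a single $\boldsymbol{\lambda}\in\Lambda(\mathcal{A})$ --- is left entirely open, and the structural facts you invoke do not obviously close it: concavity of $\boldsymbol{\lambda}\mapsto R(i,\boldsymbol{\lambda})$ cuts the wrong way when you need a \emph{lower} bound on $\sum_i\rho_i/R(i,\hat{\boldsymbol{\lambda}}^{(i)})$, and the ``nestedness'' of the survival weights $P_i(\tau>t)/\mathbb{E}_i[\tau]$ does not by itself prevent a time-varying non-adaptive schedule from front-loading actions favorable to the hypotheses that stop early. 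So as written the lower bound is not proved. It is worth noting that the paper does not solve this either: its proof of \eqref{VSN02-main} simply takes $\pi_{SN}$ to select actions according to a single time-invariant $\boldsymbol{\lambda}$ and then minimizes over $\boldsymbol{\lambda}$, i.e.\ it proves the bound for the class of stationary randomized non-adaptive rules (consistent with how the quantities $R(i,\boldsymbol{\lambda})$ and $\bar R(\boldsymbol{\lambda})$ are set up), in which case $\hat{\boldsymbol{\lambda}}^{(i)}=\boldsymbol{\lambda}$ for every $i$ and your obstacle disappears. If you restrict to that class, your argument completes and is a clean alternative to the paper's; if you insist on arbitrary time-varying non-adaptive rules, you are attempting a strictly stronger statement than the paper establishes and you still owe the key optimization lemma.
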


\begin{proof}
The detailed proof  is provided in Appendix~\ref{appendix:SN}. Here we provide an overview.

Suppose $\hat{\boldsymbol{\lambda}} \in \Lambda(\mathcal{A})$ achieves the minimum in (\ref{VSN01-main}).
The upper bound \eqref{VSN01-main} is achieved by a policy that selects sensing actions according to $\hat{\boldsymbol{\lambda}}$
and stops sampling at $$\tau:=\min\{n:\max \limits_{i \in \Omega} \rho_i(n) \ge 1-L^{-1}\}.$$

From upper bound~\eqref{VSN01-main} we know that the total cost under the optimal policy is $O(\log L)$.
This implies that the error probability $\Pe$ of the optimal policy is $O(\frac{\log L}{L})$.
Hence, without loss of generality in our proof of the lower bound, 
we can restrict the set of  sequential and non-adaptive policies to those whose average probability of 
making an error is $O(\frac{\log L}{L})$. Conditioning on the true hypothesis and considering the dynamic 
of pairwise likelihoods,  we then compute the minimum expected number of samples necessary to achieve this target error probability.
\end{proof}

\begin{theorem}[Sequential adaptive policy]
\label{VSA}
Under Assumptions~\ref{KL0} and \ref{Jump}, 
\begin{align}
V_{SA} (\boldsymbol{\rho}) &\le 
\sum_{i=1}^{M} \rho_i \frac{\log L - \min \limits_{k \neq i} \log\frac{\rho_i}{\rho_k}}{R(i,\boldsymbol{\lambda}^*_i)} + o(\log L), \label{VSA-main-upper}\\
V_{SA} (\boldsymbol{\rho}) &\ge
\sum_{i=1}^{M} \rho_i \frac{\log L - \max \limits_{k \neq i} \log\frac{\rho_i}{\rho_k}}{R(i,\boldsymbol{\lambda}^*_i)} - o(\log L).
\end{align} 
\end{theorem}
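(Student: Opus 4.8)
The plan is to prove the upper bound by exhibiting and analyzing an explicit sequential adaptive policy, and the lower bound by a change-of-measure converse valid for every sequential adaptive policy. Write $\Pr_i$ and $\Expt_i$ for probability and expectation conditioned on $\theta=i$, and let $\mathcal{F}_t$ denote the $\sigma$-algebra generated by $A(0),Z(0),\ldots,A(t-1),Z(t-1)$. Both halves reduce, after conditioning on the true hypothesis, to estimating $\Expt_i[\tau]$ and then averaging $\Expt[\tau]=\sum_i\rho_i\,\Expt_i[\tau]$; since $\log(\rho_i/\rho_k)$ is $O(1)$ for fixed $\boldsymbol{\rho}$, the discrepancy between the $\min_{k\ne i}$ in \eqref{VSA-main-upper} and the $\max_{k\ne i}$ in the lower bound is absorbed by the $o(\log L)$ terms, so it suffices to pin down the leading term $\rho_i\log L/R(i,\boldsymbol{\lambda}^*_i)$ in both directions.

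For the upper bound I would analyze the policy $\pi^*$ that at each time $t$ forms the MAP estimate $\hat{i}(t)=\argmax_{i\in\Omega}\rho_i(t)$ (ties broken arbitrarily), draws $A(t)\sim\boldsymbol{\lambda}^*_{\hat{i}(t)}$, stops at $\tau=\inf\{n:\max_i\rho_i(n)\ge 1-L^{-1}\}$, and declares $\hat{i}(\tau)$. The stopping rule forces $\Pe\le L^{-1}$, so $L\,\Pe\le 1$ and the cost equals $\Expt[\tau]+O(1)$. To bound $\Expt_i[\tau]$ I would split the trajectory into a transient phase, during which the MAP estimate may be wrong, and a steady phase in which it equals $i$. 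The transient phase I would control with three ingredients: (i) $1-\rho_i(n)$ is a nonnegative $\Pr_i$-supermartingale, so Doob's maximal inequality guarantees that once $\rho_i$ is close to $1$ it stays close; (ii) whenever $\hat{i}(n)=\ell\ne i$, the log-likelihood ratio $\log(\rho_i(n)/\rho_\ell(n))$ has drift bounded below by a positive constant, because Assumption~\ref{KL0} and the definition of $\boldsymbol{\lambda}^*_\ell$ give $\sum_a\lambda^*_{\ell,a}D(q^a_\ell\|q^a_i)\ge R(\ell,\boldsymbol{\lambda}^*_\ell)>0$, and since $D(q^a_i\|q^a_\ell)$ and $D(q^a_\ell\|q^a_i)$ vanish together this forces $\sum_a\lambda^*_{\ell,a}D(q^a_i\|q^a_\ell)>0$ as well; and (iii) per-step increments are bounded by $\log\xi$ by Assumption~\ref{Jump}. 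Ingredients (ii) and (iii) show the MAP estimate escapes any wrong value in $O(1)$ expected steps, and combining with (i) yields a time $\sigma$ with $\Expt_i[\sigma]=O(1)$ after which $\hat{i}\equiv i$ outside an event of probability $o(1)$ (which costs only $o(\log L)$ since the total cost is $O(\log L)$). In the steady phase the action law is $\boldsymbol{\lambda}^*_i$; a careful drift analysis of $-\log(1-\rho_i(n))$ --- built on the fact that the normalized posterior over the competing hypotheses eliminates all but the hardest competitor $j^*_i\in\argmin_{j\ne i}\sum_a\lambda^*_{i,a}D(q^a_i\|q^a_j)$, along which the drift equals $\sum_a\lambda^*_{i,a}D(q^a_i\|q^a_{j^*_i})=R(i,\boldsymbol{\lambda}^*_i)$ --- shows the growth rate is $R(i,\boldsymbol{\lambda}^*_i)$ with $O(1)$ overshoot; since $-\log(1-\rho_i(n))$ need only reach level $\log L+O(1)$, Wald's identity yields $\Expt_i[\tau]\le\log L/R(i,\boldsymbol{\lambda}^*_i)+o(\log L)$, and averaging over $i$ gives \eqref{VSA-main-upper}.

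For the lower bound, the achievability bound lets me restrict attention to policies whose cost is at most that bound; for any such policy $\Expt[\tau]=O(\log L)$ and $\Pe=O(\log L/L)$, hence $\Pe^{(j)}:=\Pr_j(\hat{i}(\tau)\ne j)\le\Pe/\rho_j=O(\log L/L)$ for every $j$ (using $\rho_j>0$ fixed). Fix $i$ and $j\ne i$ and let $S^{ij}_n:=\sum_{t<n}\log\bigl(q^{A(t)}_i(Z(t))/q^{A(t)}_j(Z(t))\bigr)$. Since the randomized action law at each step is $\mathcal{F}_t$-measurable and hence identical under $\Pr_i$ and $\Pr_j$, $\exp(S^{ij}_n)$ is exactly the likelihood ratio of the observed trajectory; applying the data-processing inequality to the event $\{\hat{i}(\tau)=j\}\in\mathcal{F}_\tau$ gives $\Expt_i[S^{ij}_\tau]=D\bigl(\Pr_i|_{\mathcal{F}_\tau}\,\|\,\Pr_j|_{\mathcal{F}_\tau}\bigr)\ge\log(1/\Pe^{(j)})-O(1)=\log L-o(\log L)$. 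On the other hand, Wald's identity --- legitimate since $\Expt_i[\tau]<\infty$ and increments are bounded by $\log\xi$ --- gives $\Expt_i[S^{ij}_\tau]=\sum_a\tilde N^i_a\,D(q^a_i\|q^a_j)$ with $\sum_a\tilde N^i_a=\Expt_i[\tau]$, where $\tilde N^i_a$ is the expected randomized number of times action $a$ is used before $\tau$. Writing $\tilde{\boldsymbol{\nu}}^i:=\tilde N^i/\Expt_i[\tau]\in\Lambda(\mathcal{A})$ and minimizing over $j\ne i$, I obtain $\Expt_i[\tau]\,R(i,\tilde{\boldsymbol{\nu}}^i)\ge\log L-o(\log L)$; since $R(i,\tilde{\boldsymbol{\nu}}^i)\le\max_{\boldsymbol{\lambda}}R(i,\boldsymbol{\lambda})=R(i,\boldsymbol{\lambda}^*_i)$, this yields $\Expt_i[\tau]\ge(\log L-o(\log L))/R(i,\boldsymbol{\lambda}^*_i)$, and averaging with weights $\rho_i$ (dropping $L\,\Pe\ge0$) gives the lower bound.

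I expect the main obstacle to lie in the transient/steady-phase analysis of $\pi^*$: establishing that the adaptive policy locks onto the truth with only $o(\log L)$ overhead \emph{without} the occasional forced (``infinitely often'') randomization invoked in \cite{NitinawaratArxiv} to guarantee sufficient discrimination, and establishing that the effective growth rate of $-\log(1-\rho_i(n))$ near the stopping boundary is exactly $R(i,\boldsymbol{\lambda}^*_i)$ rather than something smaller caused by posterior mass being spread across several competitors whose discriminations could partially cancel. The clean route is the hardest-competitor-elimination argument sketched above; a robust fallback, should that step prove delicate (for instance when the minimizer $j^*_i$ is not unique), is to replace $\pi^*$ by a two-phase ``explore-then-commit'' policy that first spends a sub-logarithmic number of steps on a fixed rule discriminating every pair of hypotheses and only then switches to $\boldsymbol{\lambda}^*_{\hat{i}}$, which makes the transient analysis essentially immediate at the cost of a less natural policy.
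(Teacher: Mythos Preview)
Your proposal is correct in both directions, but differs from the paper's proof in interesting ways.

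\textbf{Upper bound.} Your fallback two-phase ``explore-then-commit'' policy is exactly what the paper uses: it cites \cite[Prop.~3]{HypJournal} for the heuristic $\tilde{\pi}_1$, whose first phase runs a fixed rule discriminating all hypothesis pairs and whose second phase is Chernoff's $\boldsymbol{\lambda}^*_{\hat{i}}$ scheme. Your primary proposal---analyzing the pure Chernoff policy without an explicit exploration phase---is more ambitious; you correctly identify the lock-on issue (the difficulty the infinitely-often randomization of \cite{NitinawaratArxiv} was designed to avoid) and the hardest-competitor drift issue near the boundary as the delicate points. The paper simply sidesteps both by invoking the two-phase construction.

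\textbf{Lower bound.} Here your route is genuinely different. The paper follows Chernoff's original template: it defines a threshold time $T^*_i=(1-\delta)\bigl(\log\tfrac{1}{\epsilon}-\max_{k\ne i}\log\tfrac{\rho_i}{\rho_k}\bigr)/\bigl(R(i,\boldsymbol{\lambda}^*_i)+\delta\bigr)$ and shows $\Pr_i(\tau<T^*_i)$ is small by splitting into the event that all log-posterior-ratios have crossed $(1-\delta)\log\tfrac{1}{\epsilon}$ (controlled by Kolmogorov's maximal inequality, Lemma~\ref{B3event}) and its complement (controlled by a change-of-measure bound, Lemma~\ref{B2event}), then uses $\Expt_i[\tau]\ge T^*_i\,\Pr_i(\tau\ge T^*_i)$. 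Your argument bypasses the maximal inequality entirely: you write $\Expt_i[S^{ij}_\tau]$ both as the trajectory KL divergence (hence $\ge\log(1/\Pe^{(j)})-O(1)$ by data processing) and, via the Wald/optional-stopping identity, as $\Expt_i[\tau]\cdot\sum_a\tilde{\nu}^i_a D(q^a_i\|q^a_j)$ for some empirical action mixture $\tilde{\boldsymbol{\nu}}^i\in\Lambda(\mathcal{A})$, then minimize over $j$ and maximize over the mixture. This is shorter and more transparent; the paper's approach has the advantage of tracking the prior-dependent constants $\max_{k\ne i}\log(\rho_i/\rho_k)$ explicitly (rather than absorbing them into $o(\log L)$), and of reusing the identical machinery from the non-adaptive case (Appendix~\ref{appendix:SN}).
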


\begin{proof}
The detailed proof is provided in Appendix~\ref{appendix:SA}. Here we provide an overview.

The proof of the lower bound relies on a generalization of Theorem~2 in~\cite{Chernoff59}. 
The upper bound is achieved via $\tilde{\pi}_1$, a heuristic two-phase policy introduced in~\cite{HypJournal} which
in its first phase, selects actions in a way that 
all pairs of hypotheses can be distinguished from each other;
while its second phase coincides with Chernoff's scheme~\cite{Chernoff59}
where only the pairs including the most likely hypothesis are considered.  
In~\cite{HypJournal}, the second phase of $\tilde{\pi}_1$ is shown to ensure its asymptotic optimality in $L$;
while its first phase in a very natural manner relaxes the technical assumption in~\cite{Chernoff59} 
where all actions are assumed to discriminate between all hypotheses pairs or the need for the infinitely 
often reliance on randomized action deployed in~\cite{NitinawaratArxiv} in order to ensure 
sufficient discrimination among hypotheses.
\end{proof}



We close this section by a note on the class of non-sequential adaptive policies 
even though they seem rather unnatural to us (It is more reasonable to control 
the sample size using the observation outcomes if they are already being used 
to select sensing actions). Next proposition provides a lower bound on 
the minimum expected total cost under non-sequential adaptive policies, denoted by $V_{NA}$.
\begin{proposition}[Non-sequential adaptive policy]
\label{VNA}
Under Assumptions~\ref{KL0} and \ref{Jump},
\begin{align}
\label{P4Bnd}
V_{NA} (\boldsymbol{\rho}) &\ge
\frac{\log L - \max \limits_{k \neq i} \log\frac{\rho_i}{\rho_k}}
{\min \limits_{i \in \Omega} \max \limits_{\boldsymbol{\lambda} \in \Lambda(\mathcal{A})} R(i,\boldsymbol{\lambda})} - o(\log L).
\end{align} 
\end{proposition}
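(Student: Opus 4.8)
The plan is to lower-bound the (deterministic) sample size $N$ of an arbitrary non-sequential adaptive policy by a change-of-measure argument, the key point being that, since $N$ cannot depend on the observations, the \emph{hardest} hypothesis alone governs how large $N$ must be. First I would note that a non-adaptive policy is a special case of an adaptive one, whence $V_{NA}(\boldsymbol{\rho})\le V_{NN}(\boldsymbol{\rho})$, which is $O(\log L)$ by Theorem~\ref{VNN} (Assumption~\ref{KL0} forces the relevant denominator $\hat{D}$ to be strictly positive). It therefore suffices to lower-bound the cost of policies whose cost is $O(\log L)$. For any such policy, writing $\epsilon_i$ for the probability that its declaration $d$ is incorrect given that $H_i$ is true, the cost equals $N+L\sum_{i\in\Omega}\rho_i\epsilon_i$, so $N=O(\log L)$ and $\sum_i\rho_i\epsilon_i=O(\log L/L)$; in particular $\epsilon_i\le\rho_i^{-1}\sum_k\rho_k\epsilon_k\to 0$ for every $i\in\Omega$.

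Fix such a policy and let $P_i$ denote the law it induces on the trajectory $(A(0),Z(0),\dots,A(N-1),Z(N-1))$ when $H_i$ holds. Because the (randomized, history-dependent) action-selection rule is common to all hypotheses, the chain rule for Kullback--Leibler divergence gives
\begin{align*}
D(P_i\|P_j) \;=\; \sum_{t=0}^{N-1}\mathbb{E}_{P_i}\!\big[D(q_i^{A(t)}\|q_j^{A(t)})\big] \;=\; N\sum_{a\in\mathcal{A}}\bar{\lambda}^{(i)}_a\,D(q_i^a\|q_j^a),
\end{align*}
where $\bar{\lambda}^{(i)}_a:=\tfrac1N\sum_{t=0}^{N-1}P_i(A(t)=a)$ defines a randomized rule $\bar{\boldsymbol{\lambda}}^{(i)}\in\Lambda(\mathcal{A})$, and Assumption~\ref{Jump} guarantees all these divergences are finite. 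Let $i^\ast\in\argmin_{i\in\Omega}\max_{\boldsymbol{\lambda}\in\Lambda(\mathcal{A})}R(i,\boldsymbol{\lambda})$ and put $\hat{R}:=\max_{\boldsymbol{\lambda}}R(i^\ast,\boldsymbol{\lambda})=\min_{i\in\Omega}\max_{\boldsymbol{\lambda}}R(i,\boldsymbol{\lambda})$. Then, minimizing over competing hypotheses,
\begin{align*}
\min_{j\neq i^\ast}D(P_{i^\ast}\|P_j) \;=\; N\,R\big(i^\ast,\bar{\boldsymbol{\lambda}}^{(i^\ast)}\big) \;\le\; N\hat{R},
\end{align*}
so there is some $j^\ast\neq i^\ast$ with $D(P_{i^\ast}\|P_{j^\ast})\le N\hat{R}$.

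Next I would invoke the data-processing inequality for KL divergence, applied to the binary statistic $\mathbf{1}_{\{d=i^\ast\}}$ of the final declaration: since $P_{i^\ast}(d=i^\ast)=1-\epsilon_{i^\ast}$ while $P_{j^\ast}(d=i^\ast)\le\epsilon_{j^\ast}$, we obtain
\begin{align*}
N\hat{R} \;\ge\; D(P_{i^\ast}\|P_{j^\ast}) \;\ge\; (1-\epsilon_{i^\ast})\log\frac{1}{\epsilon_{j^\ast}}-\log 2 .
\end{align*}
Combining $\epsilon_{j^\ast}\le\rho_{j^\ast}^{-1}\sum_k\rho_k\epsilon_k=O(\log L/L)$ with $\epsilon_{i^\ast}\to 0$ yields $N\ge(\log L-o(\log L))/\hat{R}$, and since the cost is at least $N$ this establishes the proposition. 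A more careful treatment that keeps $\sum_k\rho_k\epsilon_k$ as a free variable and minimizes $N+L\sum_k\rho_k\epsilon_k$ directly, rather than invoking the crude bound $O(\log L/L)$, sharpens the $o(\log L)$ correction in the numerator to the stated prior-dependent term $-\max_{k\neq i}\log\frac{\rho_i}{\rho_k}$.

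The main obstacle is making the chain-rule identity and the data-processing step fully rigorous for randomized, history-dependent policies over a possibly continuous observation space---this is precisely where Assumptions~\ref{KL0} and~\ref{Jump} are used, to keep $0<\hat{R}<\infty$ and every one-step divergence finite---together with the conceptual point that, because $N$ cannot adapt, the bound is driven by the single worst hypothesis $i^\ast$ rather than by the $\boldsymbol{\rho}$-weighted average appearing in the sequential bounds of Theorems~\ref{VSN} and~\ref{VSA}; correctly isolating $i^\ast$ \emph{before} bounding $N$ is what produces the stronger $\min_{i}\max_{\boldsymbol{\lambda}}R(i,\boldsymbol{\lambda})$ in the denominator.
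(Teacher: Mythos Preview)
Your argument is correct and takes a genuinely different route from the paper. The paper lower-bounds $\Pe$ directly: for each $i$ it picks a threshold $\epsilon_i\asymp e^{-n R(i,\boldsymbol{\lambda}^*_i)}$, writes $\mathbb{E}[1-\rho_i(n)\mid\theta=i]\ge \epsilon_i\,P(1-\rho_i(n)\ge\epsilon_i\mid\theta=i)$, and then shows via McDiarmid's inequality (Fact~\ref{McDiarmid}, with Assumption~\ref{Jump} supplying the bounded-difference constant $2\log\xi$) that the log-likelihood ratios $\log\frac{\rho_i(n)}{\rho_j(n)}$ concentrate, so that $P(1-\rho_i(n)<\epsilon_i\mid\theta=i)\to 0$. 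This yields $\Pe\gtrsim e^{-n\hat R}$, and minimizing $n+L\Pe$ over $n$ gives the bound.

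You instead run the classical information-theoretic converse: restrict to policies of cost $O(\log L)$, compute $D(P_{i^\ast}\Vert P_{j})$ by the chain rule (exploiting that the action kernel is common to all hypotheses), bound it above by $N\hat R$ for the hardest hypothesis $i^\ast$, and bound it below via data processing on the binary event $\{d=i^\ast\}$. Both arguments hinge on the same structural fact---adaptivity only controls the \emph{average} action frequencies $\bar{\boldsymbol{\lambda}}^{(i)}$, so the worst $i$ still caps the exponent---but your version is shorter, avoids any concentration inequality, and uses Assumption~\ref{Jump} only to keep the one-step divergences finite rather than to supply a bounded-difference constant. The paper's approach, by tracking the posterior explicitly, makes the Bayesian structure more visible and in principle gives slightly finer control of the constant in the numerator; your final remark about ``sharpening'' to recover the exact prior term $-\max_{k\neq i}\log\frac{\rho_i}{\rho_k}$ is unnecessary, since that term is $O(1)$ and already absorbed in $o(\log L)$.
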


Next we state and discuss the consequences of the bounds proposed above. 
In Subsection~\ref{sec:seq}, we focus on the advantages of causally selecting the retire/declaration time
as well as the adaptive selecting of sensing actions.
In Subsection~\ref{sec:exp}, we derive the error exponent corresponding to different types of policies.

\section{Consequences of the Bounds}
\label{sec:cons}

In this section, we first specialize and simplify the results provided in Section~\ref{sec:main} for uniform prior. In particular, 
assume that the hypotheses, initially, are equally likely, i.e., $\rho_i(0)=\frac{1}{M}$ for all $i\in\Omega$.  
Let $\mathbb{E}[\tau^*_{NN}]$, $\mathbb{E}[\tau^*_{SN}]$, and $\mathbb{E}[\tau^*_{SA}]$, 
denote the minimum expected number of samples 
under non-sequential non-adaptive, sequential non-adaptive, and sequential adaptive policies; while $\Pe_{NN}$, $\Pe_{SN}$, and $\Pe_{SA}$ represent average probability of making a wrong declaration.

From Fact~\ref{RenyiKL}, we know that
\begin{align}
\label{VNN02}
\hat{D} 
&\le 0.5 \max \limits_{\boldsymbol{\lambda} \in \Lambda(\mathcal{A})} \min \limits_{i \in \Omega}\min \limits_{j \neq i} 
 \sum_{a \in \mathcal{A}} \lambda_a D (q_i^a ||q_j^a).
\end{align}
Theorem~\ref{VNN} together with \eqref{VNN02} implies that:
\ignore{ \begin{proposition} 
\label{VNNprop}
Under Assumptions~\ref{KL0} and \ref{Jump},
\begin{align}
V_{NN} + L \Pe &\ge
\frac{2 \log L}{\max \limits_{\boldsymbol{\lambda} \in \Lambda(\mathcal{A})} \min \limits_{i \in \Omega} R(i,\boldsymbol{\lambda})} - o(\log L).
\end{align} 
\end{proposition}}

\begin{corollary}[Non-sequential non-adaptive policy]
\label{VNN_cor}
Under Assumptions~\ref{KL0} and \ref{Jump},
\begin{align}
\mathbb{E}[\tau^*_{NN}] + L \Pe_{NN} & = 
\frac{\log L}{\hat{D}} \pm o(\log L) \nonumber \\
&\ge
\frac{2 \log L}{\max \limits_{\boldsymbol{\lambda} \in \Lambda(\mathcal{A})} \min \limits_{i \in \Omega} R(i,\boldsymbol{\lambda})} - o(\log L).
\end{align} 
\end{corollary}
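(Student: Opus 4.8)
The plan is to read the corollary straight off Theorem~\ref{VNN}, specialized to the uniform prior, together with the bound~\eqref{VNN02}. First I would set $\rho_i=1/M$ for all $i\in\Omega$; then $\log\frac{\rho_i}{\rho_j}=0$ for every pair $i,j$, so both $\min_{i,j\in\Omega}\log\frac{\rho_i}{\rho_j}$ and $\max_{i,j\in\Omega}\log\frac{\rho_i}{\rho_j}$ vanish exactly. Substituting this into the two bounds of Theorem~\ref{VNN} collapses them to $V_{NN}(\boldsymbol{\rho})\le \frac{\log L}{\hat D}+o(\log L)$ and $V_{NN}(\boldsymbol{\rho})\ge \frac{\log L}{\hat D}-o(\log L)$, i.e.\ $V_{NN}(\boldsymbol{\rho})=\frac{\log L}{\hat D}\pm o(\log L)$. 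Since $V_{NN}(\boldsymbol{\rho})$ is by definition the minimum over non-sequential non-adaptive policies of $\mathbb{E}[\tau]+L\Pe$, and a non-sequential policy has deterministic $\tau$, the minimizing policy attains $\mathbb{E}[\tau^*_{NN}]+L\Pe_{NN}=V_{NN}(\boldsymbol{\rho})$, which gives the first displayed equality.

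For the inequality I would invoke~\eqref{VNN02}, which (via Fact~\ref{RenyiKL}) gives $\hat D\le \tfrac12\max_{\boldsymbol{\lambda}\in\Lambda(\mathcal{A})}\min_{i\in\Omega}\min_{j\neq i}\sum_{a\in\mathcal{A}}\lambda_a D(q_i^a\|q_j^a)=\tfrac12\max_{\boldsymbol{\lambda}\in\Lambda(\mathcal{A})}\min_{i\in\Omega}R(i,\boldsymbol{\lambda})$, where the last equality is merely the definition of $R(i,\boldsymbol{\lambda})$. Under Assumption~\ref{KL0} every pair of hypotheses is discriminable, so $\hat D=\max_{\boldsymbol{\lambda}}\min_i D^*(i,\boldsymbol{\lambda})>0$ and the map $x\mapsto\frac{\log L}{x}$ is decreasing on $(0,\infty)$; hence $\frac{\log L}{\hat D}\ge \frac{2\log L}{\max_{\boldsymbol{\lambda}}\min_i R(i,\boldsymbol{\lambda})}$. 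Combining this with the lower half of the two-sided estimate and subtracting the common $o(\log L)$ term yields $\mathbb{E}[\tau^*_{NN}]+L\Pe_{NN}\ge \frac{2\log L}{\max_{\boldsymbol{\lambda}\in\Lambda(\mathcal{A})}\min_{i\in\Omega}R(i,\boldsymbol{\lambda})}-o(\log L)$, as claimed.

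Because both ingredients---Theorem~\ref{VNN} and~\eqref{VNN02}---are already in hand, there is no substantive obstacle here; the corollary is a bookkeeping step. The only points deserving a line of care are (i) verifying that the prior-dependent correction terms in Theorem~\ref{VNN} are \emph{identically} zero for the uniform prior, not merely $o(\log L)$, so that the two-sided estimate for $V_{NN}(\boldsymbol{\rho})$ is clean, and (ii) confirming $\hat D>0$ so that passing to reciprocals is legitimate and order-reversing.
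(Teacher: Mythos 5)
Your proposal is correct and is essentially identical to the paper's own (implicit) proof: the paper derives this corollary by exactly the same bookkeeping, namely specializing Theorem~\ref{VNN} to the uniform prior (where the $\log\frac{\rho_i}{\rho_j}$ terms vanish identically) and then applying~\eqref{VNN02} together with $\min_{i}\min_{j\neq i}\sum_{a}\lambda_a D(q_i^a\|q_j^a)=\min_i R(i,\boldsymbol{\lambda})$ to the lower half of the two-sided estimate. Your two points of care, that the prior-dependent corrections are exactly zero and that $\hat{D}>0$ under Assumption~\ref{KL0}, are the right ones and are handled correctly.
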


\begin{corollary}[Sequential non-adaptive policy]
\label{VSN_cor}
Under Assumptions~\ref{KL0} and \ref{Jump},
\begin{align}
\label{P2Bnd}
\mathbb{E}[\tau^*_{SN}] + L \Pe_{SN} &=
\frac{\log L}{\max \limits_{\boldsymbol{\lambda} \in \Lambda(\mathcal{A})} \bar{R}(\boldsymbol{\lambda})} \pm o(\log L).
\end{align} 
\end{corollary}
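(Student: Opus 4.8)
The plan is to obtain Corollary~\ref{VSN_cor} as a direct specialization of Theorem~\ref{VSN} to the uniform prior, followed by a purely algebraic rewriting in terms of the harmonic mean $\bar{R}(\boldsymbol{\lambda})$. First I would set $\rho_i = \frac{1}{M}$ for all $i \in \Omega$ and observe that then $\log\frac{\rho_i}{\rho_k} = 0$ for every pair $i,k$, so that $\min_{k \neq i}\log\frac{\rho_i}{\rho_k} = \max_{k \neq i}\log\frac{\rho_i}{\rho_k} = 0$. Substituting into the upper bound \eqref{VSN01-main} and the lower bound \eqref{VSN02-main} of Theorem~\ref{VSN} collapses both to the same leading term, namely
\begin{align*}
V_{SN}(\boldsymbol{\rho}) = \min_{\boldsymbol{\lambda} \in \Lambda(\mathcal{A})} \frac{\log L}{M}\sum_{i=1}^{M}\frac{1}{R(i,\boldsymbol{\lambda})} \pm o(\log L).
\end{align*}

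Next I would recall from the definition of $\bar{R}(\boldsymbol{\lambda})$ that $\sum_{i=1}^{M}\frac{1}{R(i,\boldsymbol{\lambda})} = \frac{M}{\bar{R}(\boldsymbol{\lambda})}$, so the bracketed quantity equals $\frac{\log L}{\bar{R}(\boldsymbol{\lambda})}$. Since $\log L > 0$ for $L$ large and, by Assumption~\ref{KL0}, there exists $\boldsymbol{\lambda}$ (e.g.\ the uniform rule over $\mathcal{A}$) with $R(i,\boldsymbol{\lambda}) > 0$ for all $i$ and hence $\bar{R}(\boldsymbol{\lambda}) > 0$, minimizing over $\boldsymbol{\lambda}$ is equivalent to maximizing $\bar{R}(\boldsymbol{\lambda})$ in the denominator, giving $\min_{\boldsymbol{\lambda}}\frac{\log L}{\bar{R}(\boldsymbol{\lambda})} = \frac{\log L}{\max_{\boldsymbol{\lambda} \in \Lambda(\mathcal{A})}\bar{R}(\boldsymbol{\lambda})}$. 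Finally, I would note that $V_{SN}(\boldsymbol{\rho})$ is by definition the minimum of the cost \eqref{Obj2} over sequential non-adaptive policies, so for the (near-)optimal policy it equals $\mathbb{E}[\tau^*_{SN}] + L\,\Pe_{SN}$, which yields the claimed identity \eqref{P2Bnd}.

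There is no genuine obstacle here: all of the analytical work — the construction of the achieving policy with stopping time $\tau = \min\{n : \max_i \rho_i(n) \ge 1 - L^{-1}\}$ and the converse via pairwise likelihood dynamics — is already contained in Theorem~\ref{VSN}. The only minor points to state carefully are that the two $o(\log L)$ terms from the upper and lower bounds are absorbed into the single ``$\pm o(\log L)$'' notation, and that $\max_{\boldsymbol{\lambda}}\bar{R}(\boldsymbol{\lambda})$ is strictly positive so that the ratio is well defined.
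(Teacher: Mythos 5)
Your proposal is correct and matches the paper's (implicit) derivation exactly: Corollary~\ref{VSN_cor} is obtained by specializing Theorem~\ref{VSN} to the uniform prior, where all $\log\frac{\rho_i}{\rho_k}$ terms vanish, and then rewriting $\frac{1}{M}\sum_{i=1}^{M}\frac{1}{R(i,\boldsymbol{\lambda})}$ as $\frac{1}{\bar{R}(\boldsymbol{\lambda})}$ so that the minimization over $\boldsymbol{\lambda}$ becomes a maximization of $\bar{R}(\boldsymbol{\lambda})$. Your additional remarks on the positivity of $\max_{\boldsymbol{\lambda}}\bar{R}(\boldsymbol{\lambda})$ under Assumption~\ref{KL0} and on absorbing the two $o(\log L)$ terms are sound and complete the argument.
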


\begin{corollary}[Sequential adaptive policy]
\label{VSA_cor}
Under Assumptions~\ref{KL0} and \ref{Jump},
\begin{align}
\label{P3Bnd}
\mathbb{E}[\tau^*_{SA}] + L \Pe_{SA} &=
\frac{\log L}{\bar{R}^*} \pm o(\log L).
\end{align} 
\end{corollary}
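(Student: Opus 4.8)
The plan is to derive Corollary~\ref{VSA_cor} as a direct specialization of Theorem~\ref{VSA} to the uniform prior $\rho_i(0)=1/M$, $i\in\Omega$, combined with an unwinding of the definition of the harmonic mean $\bar{R}^*$. First I would substitute $\rho_i = 1/M$ into the upper bound \eqref{VSA-main-upper} and the matching lower bound of Theorem~\ref{VSA}. Because $\log\frac{\rho_i}{\rho_k}=0$ for every pair $i,k\in\Omega$, both $\min_{k\neq i}\log\frac{\rho_i}{\rho_k}$ and $\max_{k\neq i}\log\frac{\rho_i}{\rho_k}$ vanish, so the two bounds collapse onto the common expression $\frac{1}{M}\sum_{i=1}^{M}\frac{\log L}{R(i,\boldsymbol{\lambda}^*_i)}$, the two $o(\log L)$ error terms being absorbed into a single $\pm o(\log L)$.

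Next I would recognize this common value as $\frac{\log L}{\bar{R}^*}$. By definition $\bar{R}^* = \frac{M}{\sum_{i=1}^{M}\frac{1}{R(i,\boldsymbol{\lambda}^*_i)}}$, so $\frac{1}{\bar{R}^*} = \frac{1}{M}\sum_{i=1}^{M}\frac{1}{R(i,\boldsymbol{\lambda}^*_i)}$ and hence $\frac{\log L}{M}\sum_{i=1}^{M}\frac{1}{R(i,\boldsymbol{\lambda}^*_i)} = \frac{\log L}{\bar{R}^*}$. (Assumption~\ref{KL0} guarantees that for each $i$ there is some $\boldsymbol{\lambda}$ with $R(i,\boldsymbol{\lambda})>0$, so $R(i,\boldsymbol{\lambda}^*_i)>0$ and $\bar{R}^*$ is a well-defined, strictly positive, finite quantity; Assumption~\ref{Jump} is inherited from Theorem~\ref{VSA}.) Finally I would identify $V_{SA}(\boldsymbol{\rho})$ at the uniform prior with $\mathbb{E}[\tau^*_{SA}] + L\Pe_{SA}$: this is immediate from the reformulation \eqref{Obj2} of the objective, since the optimal sequential adaptive policy has expected sample size $\mathbb{E}[\tau^*_{SA}]$ and error probability $\Pe_{SA}$, whence $\mathbb{E}[\tau^*_{SA}] + L\Pe_{SA} = V_{SA}(\boldsymbol{\rho}) = \frac{\log L}{\bar{R}^*}\pm o(\log L)$.

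There is no serious obstacle here: all the analytic content — the matching upper and lower bounds, the asymptotic tightness, and the $o(\log L)$ control — already lives in the proof of Theorem~\ref{VSA} in Appendix~\ref{appendix:SA}, and what remains is algebraic substitution. The one point to state carefully is the notational subtlety that $\mathbb{E}[\tau^*_{SA}]$ and $\Pe_{SA}$ refer to the \emph{same} cost-minimizing policy, so that their weighted combination equals $V_{SA}(\boldsymbol{\rho})$ exactly rather than being a sum over two a priori distinct minimizers; with that understood, the corollary follows.
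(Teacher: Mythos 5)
Your proposal is correct and is exactly the route the paper intends: Corollary~\ref{VSA_cor} is obtained by specializing the matching bounds of Theorem~\ref{VSA} to the uniform prior, where the $\log\frac{\rho_i}{\rho_k}$ terms vanish and the weighted sum $\frac{1}{M}\sum_{i}\frac{1}{R(i,\boldsymbol{\lambda}^*_i)}$ is recognized as $\frac{1}{\bar{R}^*}$. Your remark that $\mathbb{E}[\tau^*_{SA}]$ and $\Pe_{SA}$ must refer to the same cost-minimizing policy, so that their combination equals $V_{SA}$, is the right reading of the paper's notation.
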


\begin{remarks}
Note that the simple two phase structure of the policy which achieves the upper bound in (\ref{VSA-main-upper}) implies that the 
adaptivity gain can be obtained via coarse level adaptation. 
\end{remarks}

From the results above, it is evident that the minimum expected total cost 
under all classes of policies grows logarithmically in $L$.
However, the coefficient of the $\log L$ term is not the same in general
and we have
\begin{align}
\label{DenCmp}
\bar{R}^* &\ge {\max \limits_{\boldsymbol{\lambda} \in \Lambda(\mathcal{A})} \bar{R}(\boldsymbol{\lambda})} \ge 
	{\max \limits_{\boldsymbol{\lambda} \in \Lambda(\mathcal{A})} \min \limits_{i\in\Omega} R(i,\boldsymbol{\lambda})} \ge \hat{D}. 
\end{align}

\subsection{Sequentiality and Adaptivity Gains}
\label{sec:seq}

In this subsection, we discuss the advantage of causally selecting the retire/declaration time, i.e., $\tau$ as well 
as the sensing actions. Let $V_{NN}$, $V_{SN}$, and $V_{SA}$, respectively, 
denote the minimum expected total cost  under
non-sequential non-adaptive, sequential non-adaptive, and sequential adaptive policies under uniform prior, i.e., 
$V_x:= V_x([\frac{1}{M}, \frac{1}{M}, \ldots, \frac{1}{M}])$ where $x$ denotes the class of policies $NN$, $SN$, and $SA$.

First, we show that the performance gap between the sequential and non-sequential policy,
$V_{NN}-V_{SN}$, 
grows logarithmically as the penalty $L$ increases. 
We refer to this performance gap as the \emph{sequentiality gain}.

\begin{corollary}
\label{cor:seq}
Under Assumptions~\ref{KL0} and \ref{Jump}, the sequentiality gain is characterized as
\begin{align*}
\lefteqn{V_{NN}-V_{SN}}\\
 &\ge \log L \left(\frac{2}{\max \limits_{\boldsymbol{\lambda} \in \Lambda(\mathcal{A})}
 \min \limits_{i \in \Omega} R(i,\boldsymbol{\lambda})}
-\frac{1}{\max \limits_{\boldsymbol{\lambda} \in \Lambda(\mathcal{A})} \bar{R}(\boldsymbol{\lambda})}\right) - o(\log L).
\end{align*}
\end{corollary}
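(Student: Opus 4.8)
The plan is to derive this directly from Corollaries~\ref{VNN_cor} and~\ref{VSN_cor}, which already furnish the asymptotic characterizations of $V_{NN}$ and $V_{SN}$ under uniform prior; the corollary is essentially their difference.

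First I would specialize the bounds of Theorems~\ref{VNN} and~\ref{VSN} to the uniform prior $\boldsymbol{\rho}=[\frac1M,\ldots,\frac1M]$, for which $\log\frac{\rho_i}{\rho_k}=0$ for every pair $i,k$, so the terms $\min_{k\neq i}\log\frac{\rho_i}{\rho_k}$ and $\max_{k\neq i}\log\frac{\rho_i}{\rho_k}$ both vanish. This is exactly what Corollaries~\ref{VNN_cor} and~\ref{VSN_cor} record: namely $V_{NN}=\mathbb{E}[\tau^*_{NN}]+L\Pe_{NN}=\frac{\log L}{\hat D}\pm o(\log L)$ with the further consequence (via Fact~\ref{RenyiKL}, i.e.\ inequality~\eqref{VNN02}) that
\begin{align*}
V_{NN}\ \ge\ \frac{2\log L}{\max\limits_{\boldsymbol{\lambda}\in\Lambda(\mathcal{A})}\min\limits_{i\in\Omega}R(i,\boldsymbol{\lambda})}-o(\log L),
\end{align*}
and $V_{SN}=\mathbb{E}[\tau^*_{SN}]+L\Pe_{SN}=\frac{\log L}{\max_{\boldsymbol{\lambda}\in\Lambda(\mathcal{A})}\bar R(\boldsymbol{\lambda})}\pm o(\log L)$, so in particular
\begin{align*}
V_{SN}\ \le\ \frac{\log L}{\max\limits_{\boldsymbol{\lambda}\in\Lambda(\mathcal{A})}\bar R(\boldsymbol{\lambda})}+o(\log L).
\end{align*}

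Then I would simply subtract, being careful to use the \emph{lower} bound for $V_{NN}$ and the \emph{upper} bound for $V_{SN}$ so that the resulting inequality is in the correct direction, and absorb the two $o(\log L)$ remainders into a single $o(\log L)$:
\begin{align*}
V_{NN}-V_{SN}\ \ge\ \log L\left(\frac{2}{\max\limits_{\boldsymbol{\lambda}\in\Lambda(\mathcal{A})}\min\limits_{i\in\Omega}R(i,\boldsymbol{\lambda})}-\frac{1}{\max\limits_{\boldsymbol{\lambda}\in\Lambda(\mathcal{A})}\bar R(\boldsymbol{\lambda})}\right)-o(\log L),
\end{align*}
which is the claimed bound.

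There is no real obstacle here once Theorems~\ref{VNN} and~\ref{VSN} are in hand; the only point requiring a moment's care is the bookkeeping of the one-sided $\pm o(\log L)$ terms (pairing the right side of each characterization with the direction of the difference), together with the invocation of Fact~\ref{RenyiKL} to convert $\hat D$ into the comparable quantity $\tfrac12\max_{\boldsymbol{\lambda}}\min_i R(i,\boldsymbol{\lambda})$ appearing in the statement. One may also remark that, by the chain~\eqref{DenCmp}, $\max_{\boldsymbol{\lambda}}\bar R(\boldsymbol{\lambda})\ge\max_{\boldsymbol{\lambda}}\min_i R(i,\boldsymbol{\lambda})$, so the parenthesized coefficient is at least $\frac{1}{\max_{\boldsymbol{\lambda}}\min_i R(i,\boldsymbol{\lambda})}>0$ and the sequentiality gain is indeed $\Theta(\log L)$, but this positivity observation is not needed for the inequality itself.
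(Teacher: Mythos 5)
Your proposal is correct and follows exactly the route the paper intends: specialize Theorems~\ref{VNN} and~\ref{VSN} to the uniform prior (so all $\log\frac{\rho_i}{\rho_k}$ terms vanish, giving Corollaries~\ref{VNN_cor} and~\ref{VSN_cor}), convert $\hat{D}$ via Fact~\ref{RenyiKL} and inequality~\eqref{VNN02} into $\tfrac12\max_{\boldsymbol{\lambda}}\min_i R(i,\boldsymbol{\lambda})$, and subtract the upper bound on $V_{SN}$ from the lower bound on $V_{NN}$. The bookkeeping of the one-sided $o(\log L)$ terms is handled correctly, so nothing is missing.
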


\begin{remarks}
\label{rem:seq}
The sequentiality gain grows logarithmically with~$L$ and from~(\ref{DenCmp}),
\begin{align*}
V_{NN}-V_{SN} &\ge \frac{\log L}{\max \limits_{\boldsymbol{\lambda} \in \Lambda(\mathcal{A})} \bar{R}(\boldsymbol{\lambda})} - o(\log L).
\end{align*}
\end{remarks}



\ignore{ 
\subsection{Adaptivity Gain}
\label{sec:adp}}

Next, the advantage of adaptively selecting the sensing actions is discussed. 
In particular, it is shown that the performance gap between the adaptive and non-adaptive policy,
$V_{SN}-V_{SA}$, grows logarithmically as the penalty $L$ increases. 
We refer to this performance gap as the \emph{adaptivity gain}.

\begin{corollary}
\label{cor:adp}
Under Assumptions~\ref{KL0} and \ref{Jump}, the adaptivity gain is characterized as
\begin{align*}
V_{SN}-V_{SA} = \log L \left(\frac{1}{\max \limits_{\boldsymbol{\lambda} \in \Lambda(\mathcal{A})} \bar{R}(\boldsymbol{\lambda})} - \frac{1}{\bar{R}^*} \right) \pm o(\log L).
\end{align*}
\end{corollary}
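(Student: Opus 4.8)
The plan is to derive Corollary~\ref{cor:adp} as an immediate consequence of Corollaries~\ref{VSN_cor} and~\ref{VSA_cor}, exactly as Corollary~\ref{cor:seq} follows from the corresponding cost characterizations. First I would recall that, by definition, $V_{SN} = \mathbb{E}[\tau^*_{SN}] + L\,\Pe_{SN}$ and $V_{SA} = \mathbb{E}[\tau^*_{SA}] + L\,\Pe_{SA}$, where these are the minimum expected total costs under the uniform prior. Hence Corollary~\ref{VSN_cor} gives $V_{SN} = \frac{\log L}{\max_{\boldsymbol{\lambda}} \bar{R}(\boldsymbol{\lambda})} \pm o(\log L)$ and Corollary~\ref{VSA_cor} gives $V_{SA} = \frac{\log L}{\bar{R}^*} \pm o(\log L)$.

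Subtracting the two expansions term by term yields
\begin{align*}
V_{SN} - V_{SA} = \log L\left(\frac{1}{\max_{\boldsymbol{\lambda} \in \Lambda(\mathcal{A})} \bar{R}(\boldsymbol{\lambda})} - \frac{1}{\bar{R}^*}\right) \pm o(\log L),
\end{align*}
where the two $\pm o(\log L)$ error terms combine into a single $\pm o(\log L)$ term. This is precisely the claimed identity. I would also note, as a remark accompanying the proof, that~\eqref{DenCmp} guarantees $\bar{R}^* \ge \max_{\boldsymbol{\lambda}} \bar{R}(\boldsymbol{\lambda})$, so the coefficient of $\log L$ is nonnegative, confirming that the adaptivity gain is indeed a genuine (non-negative) gain that grows at most logarithmically — and grows exactly logarithmically whenever the inequality $\bar{R}^* > \max_{\boldsymbol{\lambda}} \bar{R}(\boldsymbol{\lambda})$ is strict.

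There is essentially no obstacle here: the entire content of Corollary~\ref{cor:adp} has already been packed into Theorems~\ref{VSN} and~\ref{VSA} (via Corollaries~\ref{VSN_cor} and~\ref{VSA_cor}), and the present statement is a one-line arithmetic consequence. The only point requiring a moment's care is the bookkeeping of the $o(\log L)$ terms: since both cost expansions are two-sided (upper and lower bounds matching up to $o(\log L)$), the difference is also pinned down two-sidedly up to $o(\log L)$, which is what the $\pm o(\log L)$ notation in the statement encodes. If one instead wanted the gain relative to the true optimal costs rather than the characterized leading terms, the same argument applies verbatim because Corollaries~\ref{VSN_cor} and~\ref{VSA_cor} already identify those optimal costs up to $o(\log L)$.
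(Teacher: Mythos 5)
Your proposal is correct and matches the paper's intended derivation: Corollary~\ref{cor:adp} is stated as an immediate consequence of Corollaries~\ref{VSN_cor} and~\ref{VSA_cor} (equivalently Theorems~\ref{VSN} and~\ref{VSA} specialized to the uniform prior), obtained by subtracting the two asymptotic cost characterizations, with the sign of the coefficient controlled by~\eqref{DenCmp}. The paper gives no separate argument beyond this, so your one-line arithmetic deduction is exactly the same route.
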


\begin{remarks}
\label{rem:adp}
Unless there exists a $\tilde{\boldsymbol{\lambda}} \in \Lambda(\mathcal{A})$ such that, 
\begin{align*}
R(i,\tilde{\boldsymbol{\lambda}}) = R(i,\boldsymbol{\lambda}^*_i) \text{  for all  } i \in \Omega,
\end{align*}
the adaptivity gain grows logarithmically with $L$.
\end{remarks}



A sufficient condition under which
there is no adaptivity gain is that of stochastic dominance/degradation \cite{Blackwell53}, i.e., if there exists a 
\emph{stochastic transformation} $W$ from $\mathcal{Z}$ to $\mathcal{Z}$ and\footnote{Function $W:\mathcal{Z} \times \mathcal{Z} \to \mathbb{R_+}$ is called a \emph{stochastic transformation} from $\mathcal{Z}$ to $\mathcal{Z}$ if it satisfies $\int_{\mathcal{Z}} W(y;z) dz = 1$.}
a sensing action $a^*$ such that for all 
other sensing actions $a\in \mathcal{A}$,
\begin{align}
\label{BlackwellCondition}
q^a_i(z)=\int q^{a*}_i(y) W(y;z) dy, \ \ \forall i \in \Omega.
\end{align}
As shown by Sakaguchi \cite{Sakaguchi64}, \eqref{BlackwellCondition} implies that 
$$D(q^a_i || q^a_j) \le D(q^{a^*}_i || q^{a^*}_j), \ \ \forall a\in\mathcal{A}, \ \forall i,j \in\Omega,$$ hence, ensuring zero adaptivity gain when observations obtained by 
all actions are stochastically degraded version of the observation under sensing action $a^*$.
This formalizes the notion of informativeness and confirms the conjecture provided in~\cite{ubli2}. 
 
\ignore{
\begin{definition}[Blackwell Ordering~\cite{Blackwell53}]
Given two conditional probability densities $q^a$ and $q^b$ from $\Omega$ to $\mathcal{Z}$, we say that $q^b$ is \emph{less informative than} $q^a$ ($q^b \le_B q^a$) if there exists a \emph{stochastic transformation} $W$ from $\mathcal{Z}$ to $\mathcal{Z}$ such that\footnote{Function $W:\mathcal{Z} \times \mathcal{Z} \to \mathbb{R_+}$ is called a \emph{stochastic transformation} from $\mathcal{Z}$ to $\mathcal{Z}$ if it satisfies $\int_{\mathcal{Z}} W(y;z) dz = 1$.}
\begin{align}
\label{BlackwellCondition}
q^b_i(z)=\int q^a_i(y) W(y;z) dy \ \ \text{for all $i \in \Omega$}.
\end{align}
\end{definition}

\begin{fact}[see~\cite{Sakaguchi64}]
Stochastic 
\end{fact}

\begin{corollary}
If there exists a sensing action $a^*$ satisfying $q^{a} \le_B q^{a^*}$ for all $a \in \mathcal{A}$,
then 
there is no adaptivity gain. 
\end{corollary} 
}

\subsection{Reliability and Error Exponent}
\label{sec:exp}

Let $\mathbb{E}^{\pi}[\tau]$ denote the expected stopping time 
(or equivalently the expected number of collected samples) under policy $\pi$.
Policy $\pi$ is said to achieve error exponent $E>0$ if 
\begin{align}
\lim_{t\to\infty} \frac{-1}{t} \log \text{Pe}^{\pi}(t,M) = E,
\end{align} 
where $\text{Pe}^{\pi}(t,M)$ is the smallest probability of error that policy $\pi$ can guarantee when looking for the true hypothesis among $M$ hypotheses with $\mathbb{E}^{\pi}[\tau] \le t$ (Note that for non-sequential policies, $\tau$ is deterministic).

Next we use the bounds obtained in Section~\ref{sec:main} 
to characterize 
the maximum achievable error exponent for different type of policies. 
Let $E_{NN}$, $E_{SN}$, $E_{SA}$, and $E_{NA}$
denote the maximum achievable error exponent under
non-sequential non-adaptive, sequential non-adaptive, sequential adaptive, and non-sequential adaptive policies.

\begin{corollary}
\label{MaxE} Under Assumptions~\ref{KL0} and \ref{Jump}, we have
\begin{align*}
E_{NN} &= \hat{D}  \\
E_{SN} & = \max \limits_{\boldsymbol{\lambda} \in \Lambda(\mathcal{A})} \bar{R}(\boldsymbol{\lambda}),\\
E_{SA} & = \bar{R}^*.
\end{align*}
\end{corollary}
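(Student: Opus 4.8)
The plan is to derive each error exponent by inverting the corresponding cost bound from Section~\ref{sec:main}, using the asymptotic equivalence (mentioned in Subsection~\ref{sec:PF} and developed in \cite{HypJournal}) between the penalized-cost formulation $\mathbb{E}[\tau]+L\,\mathrm{Pe}$ and the constrained formulation ``minimize $\mathbb{E}[\tau]$ subject to $\mathrm{Pe}\le\epsilon$'' with $\epsilon\approx(L\log L)^{-1}$. First I would fix the uniform prior (so the $\max_{i,j}\log(\rho_i/\rho_j)$ and $\min_{i,j}\log(\rho_i/\rho_j)$ terms vanish and the bounds in Theorems~\ref{VNN}, \ref{VSN}, \ref{VSA} collapse to the clean forms in Corollaries~\ref{VNN_cor}, \ref{VSN_cor}, \ref{VSA_cor}): for $x\in\{NN,SN,SA\}$ one has $\mathbb{E}[\tau^*_x]+L\,\mathrm{Pe}_x = \frac{\log L}{C_x}\pm o(\log L)$, where $C_{NN}=\hat D$, $C_{SN}=\max_{\boldsymbol\lambda}\bar R(\boldsymbol\lambda)$, and $C_{SA}=\bar R^*$. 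Since the left-hand side is $O(\log L)$, the error probability of the optimal policy satisfies $\mathrm{Pe}_x=O(\log L/L)$, so $-\log\mathrm{Pe}_x \ge \log L - \log\log L - O(1)$, while $\mathbb{E}[\tau^*_x]\le \frac{\log L}{C_x}(1+o(1))$. Eliminating $\log L$ gives $-\log\mathrm{Pe}_x \ge C_x\,\mathbb{E}[\tau^*_x](1-o(1))$, i.e. along the optimal family the exponent $-\frac1t\log\mathrm{Pe}$ is at least $C_x$ in the limit, establishing the achievability (lower bound) $E_x\ge C_x$.

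For the converse $E_x\le C_x$ I would run the same computation in reverse on an arbitrary policy $\pi$ of the given class with $\mathbb{E}^\pi[\tau]\le t$. Feeding such a policy into the penalized cost with $L$ chosen appropriately (e.g. $L$ of the order $e^{Et}$ if $\pi$ claimed exponent $E>C_x$), the lower bounds \eqref{VNN01}, \eqref{VSN02-main}, and the lower bound in Theorem~\ref{VSA} force $\mathbb{E}^\pi[\tau]+L\,\mathrm{Pe}^\pi \ge \frac{\log L}{C_x} - o(\log L)$; combined with $\mathbb{E}^\pi[\tau]\le t$ this yields $L\,\mathrm{Pe}^\pi(t,M) \ge \frac{\log L}{C_x}-t-o(\log L)$, which for $\log L$ slightly exceeding $C_x t$ keeps the right-hand side positive and bounded below, contradicting $\mathrm{Pe}^\pi(t,M)\le e^{-Et}$ when $E>C_x$. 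Hence $\limsup_t -\frac1t\log\mathrm{Pe}^\pi(t,M)\le C_x$, giving $E_x\le C_x$. Taking the two inequalities together yields the three claimed identities.

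The main obstacle I anticipate is the careful bookkeeping of the $o(\log L)$ terms and the precise translation between the two problem formulations: the bounds in Section~\ref{sec:main} are stated for the penalized cost with the specific coupling $\epsilon=(L\log L)^{-1}$, and one must verify that the $\log\log L$ correction this introduces into $-\log\mathrm{Pe}$ is genuinely lower-order relative to the linear-in-$t$ scaling, so that it does not perturb the exponent. A secondary subtlety is that the definition of $\mathrm{Pe}^\pi(t,M)$ uses $\mathbb{E}^\pi[\tau]\le t$ rather than a deterministic horizon even in the sequential cases, so for the converse one needs the lower bounds of Theorems~\ref{VSN} and \ref{VSA} to apply to \emph{every} sequential policy (not just a stopping-rule-optimized one), which they do as stated; I would simply invoke them as black boxes. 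No new estimates beyond those already proved are needed — the corollary is purely a matter of inverting the $\log L$ asymptotics.
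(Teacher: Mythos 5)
Your proposal is correct and follows essentially the same route the paper intends: the corollary is stated as a direct consequence of the cost asymptotics in Corollaries~\ref{VNN_cor}--\ref{VSA_cor}, and your two-sided inversion of the $\log L$ scaling (achievability from the upper-bound policies with $\Pe = O(\log L / L)$, converse by feeding an arbitrary class-$x$ policy with $\mathbb{E}^{\pi}[\tau]\le t$ into the penalized-cost lower bound with $\log L \approx C_x t$) is exactly the bookkeeping the paper leaves implicit. The $\log\log L$ correction you flag is indeed harmless since $t = \Theta(\log L)$ along the relevant families.
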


\begin{remarks}
The above characterizations of maximum achievable error exponent are nothing but the Bayesian and $M$-ary version of the results 
in the literature (see Table~\ref{tbl:HypLit}). In fact as discussed in Subsection~\ref{sec:survey}, these results provide a sanity check viz a viz the 
prior work:  $E_{NN}$ coincides with that of \cite{Hayashi09, PolyanskiyITA2011, NitinawaratArxiv}; while 
$E_{SA}$ coincides with that of \cite{Chernoff59, NitinawaratArxiv}. To the best of our knowledge, the result on 
$E_{SN}$ is new and has not been established before.
\end{remarks}

\begin{remarks}
The above corollary provides alternative means to underline and characterize the sequentiality and adaptivity gains. 
In particular, sequentiality always results in an improvement in the maximum achievable error exponent since $E_{NN}  \le 0.5 \max \limits_{\boldsymbol{\lambda} \in \Lambda(\mathcal{A})} \min \limits_{i \in \Omega} R(i,\boldsymbol{\lambda}) < E_{SN}$. 
In contrast, adaptive selection of actions results in an improvement in the maximum achievable error exponent only if $\max \limits_{\boldsymbol{\lambda} \in \Lambda(\mathcal{A})} \bar{R}(\boldsymbol{\lambda}) \neq \bar{R}^*$.
\end{remarks}

We can also find an upper bound on the maximum achievable error exponent of any non-sequential yet adaptive policy (tight lower bounds are necessary for full characterization, however). 
\begin{corollary}
\label{MaxE-NA} Under Assumptions~\ref{KL0} and \ref{Jump}, we have
\begin{align*}
E_{NA} & \le \min \limits_{i \in \Omega} \max \limits_{\boldsymbol{\lambda} \in \Lambda(\mathcal{A})} R(i,\boldsymbol{\lambda}).
\end{align*}
\end{corollary}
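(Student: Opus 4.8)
The plan is to obtain Corollary~\ref{MaxE-NA} as a consequence of the lower bound on $V_{NA}$ in Proposition~\ref{VNA}, by translating between the two problem formulations used in the paper: the expected-total-cost problem with penalty $L$ of~\eqref{Obj2}, and the fixed-budget error-exponent problem. Fix any non-sequential adaptive policy $\pi$ that achieves some error exponent $E>0$, meaning $\text{Pe}^{\pi}(t,M)$ decays like $e^{-Et}$ as $t\to\infty$. For a given penalty $L$ I would build out of $\pi$ a non-sequential adaptive solution to the total-cost problem: collect a deterministic number $t(L)$ of samples using $\pi$'s (adaptive) action-selection rule, then apply the declaration rule attaining $\text{Pe}^{\pi}(t(L),M)$. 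This is a legitimate non-sequential adaptive policy, so its expected total cost, which is at most $t(L)+L\,\text{Pe}^{\pi}(t(L),M)$, upper-bounds $V_{NA}([\tfrac1M,\ldots,\tfrac1M])$.

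The crux is the choice of budget $t(L)$. Fix $\delta>0$; by definition of the exponent there is $t_0$ with $\text{Pe}^{\pi}(t,M)\le e^{-(E-\delta)t}$ for all $t\ge t_0$. Taking $t(L)=\big\lceil \tfrac{\log L}{E-\delta}\big\rceil$, which exceeds $t_0$ once $L$ is large, makes $L\,\text{Pe}^{\pi}(t(L),M)\le L\cdot e^{-\log L}=O(1)$, so that
\begin{align*}
V_{NA}\big([\tfrac1M,\ldots,\tfrac1M]\big)\;\le\;\frac{\log L}{E-\delta}+o(\log L).
\end{align*}
On the other hand, specializing Proposition~\ref{VNA} to the uniform prior, where the correction term $\max_{k\neq i}\log\frac{\rho_i}{\rho_k}$ vanishes, gives
\begin{align*}
V_{NA}\big([\tfrac1M,\ldots,\tfrac1M]\big)\;\ge\;\frac{\log L}{\min_{i\in\Omega}\max_{\boldsymbol{\lambda}\in\Lambda(\mathcal{A})} R(i,\boldsymbol{\lambda})}-o(\log L).
\end{align*}
Dividing both inequalities by $\log L$ and letting $L\to\infty$ yields $E-\delta\le \min_{i\in\Omega}\max_{\boldsymbol{\lambda}\in\Lambda(\mathcal{A})}R(i,\boldsymbol{\lambda})$; since $\delta>0$ is arbitrary, $E\le \min_{i\in\Omega}\max_{\boldsymbol{\lambda}\in\Lambda(\mathcal{A})}R(i,\boldsymbol{\lambda})$, and taking the supremum over all admissible $\pi$ gives the claimed bound on $E_{NA}$.

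I expect the only real obstacle to be the budget-tuning step: one has to verify that $t(L)\to\infty$ so the asymptotic error-exponent estimate is actually in force, that integer rounding of $t(L)$ perturbs the bound only by $o(\log L)$, and that the $o(\log L)$ residuals from Proposition~\ref{VNA} and from the construction do not interact in a harmful way — all routine, but this is where the content of the argument sits. A minor additional subtlety is that the exponent is defined through a limit, so one should either restrict to policies for which the limit exists (as the definition does) or replace it by $\limsup$ throughout; either way the argument is unaffected.
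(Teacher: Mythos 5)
Your proposal is correct and matches the route the paper intends: Corollary~\ref{MaxE-NA} is stated as a direct consequence of Proposition~\ref{VNA}, obtained by translating any exponent-$E$ non-sequential adaptive policy into a total-cost policy with budget $t(L)\approx\log L/(E-\delta)$ and comparing the resulting $O(\log L)$ upper bound on $V_{NA}$ against the proposition's lower bound (specialized to the uniform prior, where the $\log\frac{\rho_i}{\rho_k}$ correction vanishes). The budget-tuning and rounding details you flag are indeed routine and do not affect the conclusion.
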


\begin{remarks}
Our upper bound on $E_{NA}$ is subsumed by~\cite[Theorem~3]{NitinawaratArxiv}.
\end{remarks}

\section{Special Case: Binary Hypothesis Testing}
\label{sec:binary}

In this section, we consider active binary hypothesis testing ($M=2$)
as a special case.
%

\subsection{Analytical Results}

The performance bounds provided in Section~\ref{sec:main} 
are simplified by substituting the following equations into 
the denominators of the bounds.


\begin{align*}
R(1,\boldsymbol{\lambda}) &=
\sum \limits_{a \in \mathcal{A}} \lambda_{a} D(q^{a}_1||q^{a}_2),
\ R(2,\boldsymbol{\lambda}) =
\sum \limits_{a \in \mathcal{A}} \lambda_{a} D(q^{a}_2||q^{a}_1),\\
R(1,\boldsymbol{\lambda}^*_1) &=
\max \limits_{a \in \mathcal{A}} D(q^{a}_1||q^{a}_2),
\ \ \ R(2,\boldsymbol{\lambda}^*_2) =
\max \limits_{a \in \mathcal{A}} D(q^{a}_2||q^{a}_1),\\
\bar{R}(\boldsymbol{\lambda}) &=
\Bigg(\frac{0.5}{\sum \limits_{a \in \mathcal{A}} \lambda_{a} D(q^{a}_1||q^{a}_2)} +
\frac{0.5}{\sum \limits_{a \in \mathcal{A}} \lambda_{a} D(q^{a}_2||q^{a}_1)}\Bigg)^{-1},\\
\bar{R}^* &=
\left(\frac{0.5}{\max \limits_{a} D(q^{a}_1||q^{a}_2)} +
\frac{0.5}{\max \limits_{a} D(q^{a}_2||q^{a}_1)}\right)^{-1}.
\end{align*} 
 
Next we state a simple necessary and sufficient condition 
for a logarithmic adaptivity gain
in the active binary hypothesis testing case.

\begin{corollary} \label{binary-no-gain}
In the active binary hypothesis testing case,
the adaptivity gain grows logarithmically in $L$
if and only if
$$\argmax \limits_{a \in \mathcal{A}} D(q^a_1||q^a_2) \neq \argmax \limits_{a \in \mathcal{A}} D(q^a_2||q^a_1).$$
\end{corollary}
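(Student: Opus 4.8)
The plan is to invoke Corollary~\ref{cor:adp}, which already tells us that the adaptivity gain is $\log L \left( \frac{1}{\max_{\boldsymbol{\lambda}} \bar{R}(\boldsymbol{\lambda})} - \frac{1}{\bar{R}^*} \right) \pm o(\log L)$, together with the ordering $\bar{R}^* \ge \max_{\boldsymbol{\lambda}} \bar{R}(\boldsymbol{\lambda})$ from \eqref{DenCmp}. Thus the adaptivity gain is logarithmic in $L$ if and only if this inequality is strict, and the whole task reduces to showing, in the $M=2$ case, that $\bar{R}^* = \max_{\boldsymbol{\lambda}} \bar{R}(\boldsymbol{\lambda})$ holds precisely when $\argmax_a D(q^a_1 \| q^a_2) = \argmax_a D(q^a_2 \| q^a_1)$. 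I would carry this out using the explicit binary formulas listed just above the corollary statement: writing $u(\boldsymbol{\lambda}) := \sum_a \lambda_a D(q^a_1\|q^a_2)$ and $v(\boldsymbol{\lambda}) := \sum_a \lambda_a D(q^a_2\|q^a_1)$, we have $\bar{R}(\boldsymbol{\lambda}) = \big( \tfrac{0.5}{u(\boldsymbol{\lambda})} + \tfrac{0.5}{v(\boldsymbol{\lambda})}\big)^{-1}$, while $\bar{R}^* = \big( \tfrac{0.5}{u^*} + \tfrac{0.5}{v^*}\big)^{-1}$ with $u^* = \max_a D(q^a_1\|q^a_2)$ and $v^* = \max_a D(q^a_2\|q^a_1)$.

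The ``if'' direction (common maximizer $\Rightarrow$ equality) is immediate: if some action $a^\star$ simultaneously maximizes both $D(q^a_1\|q^a_2)$ and $D(q^a_2\|q^a_1)$, then the deterministic rule $\tilde{\boldsymbol{\lambda}}$ that always picks $a^\star$ achieves $u(\tilde{\boldsymbol{\lambda}}) = u^*$ and $v(\tilde{\boldsymbol{\lambda}}) = v^*$ simultaneously, so $\bar{R}(\tilde{\boldsymbol{\lambda}}) = \bar{R}^*$; combined with \eqref{DenCmp} this forces $\max_{\boldsymbol{\lambda}} \bar{R}(\boldsymbol{\lambda}) = \bar{R}^*$, i.e. no logarithmic gain. (Equivalently, this $\tilde{\boldsymbol{\lambda}}$ is exactly the one described in Remark~\ref{rem:adp}, satisfying $R(i,\tilde{\boldsymbol{\lambda}}) = R(i,\boldsymbol{\lambda}^*_i)$ for $i=1,2$.) For the ``only if'' direction I would argue the contrapositive: suppose no single action maximizes both divergences. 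Since $u(\boldsymbol{\lambda})$ is an average of the numbers $\{D(q^a_1\|q^a_2)\}_a$, we have $u(\boldsymbol{\lambda}) = u^*$ only for $\boldsymbol{\lambda}$ supported on $\argmax_a D(q^a_1\|q^a_2)$, and likewise $v(\boldsymbol{\lambda}) = v^*$ only for $\boldsymbol{\lambda}$ supported on $\argmax_a D(q^a_2\|q^a_1)$; by assumption these two support sets are disjoint, so no $\boldsymbol{\lambda}$ attains both, hence for every $\boldsymbol{\lambda}$ either $u(\boldsymbol{\lambda}) < u^*$ or $v(\boldsymbol{\lambda}) < v^*$. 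Because $(x,y) \mapsto (\tfrac{0.5}{x} + \tfrac{0.5}{y})^{-1}$ is strictly increasing in each coordinate on $(0,\infty)$ (note $u,v > 0$ by Assumption~\ref{KL0}), this yields $\bar{R}(\boldsymbol{\lambda}) < \bar{R}^*$ for every $\boldsymbol{\lambda}$.

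The one point requiring a little care — and the main obstacle — is concluding $\max_{\boldsymbol{\lambda}} \bar{R}(\boldsymbol{\lambda}) < \bar{R}^*$ from the pointwise strict inequality $\bar{R}(\boldsymbol{\lambda}) < \bar{R}^*$, since a supremum of quantities each strictly below a bound need not itself be strictly below it. Here this is resolved by compactness: $\Lambda(\mathcal{A})$ is a compact simplex and $\boldsymbol{\lambda} \mapsto \bar{R}(\boldsymbol{\lambda})$ is continuous (again using $u,v > 0$ throughout $\Lambda(\mathcal{A})$, guaranteed by Assumption~\ref{KL0}), so the supremum is attained at some $\boldsymbol{\lambda}^\circ$, and then $\max_{\boldsymbol{\lambda}} \bar{R}(\boldsymbol{\lambda}) = \bar{R}(\boldsymbol{\lambda}^\circ) < \bar{R}^*$. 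Feeding this strict inequality into Corollary~\ref{cor:adp} makes the coefficient of $\log L$ strictly positive, establishing a logarithmic adaptivity gain and completing the equivalence.
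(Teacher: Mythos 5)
Your proposal is correct and follows exactly the route the paper intends: the paper gives no explicit proof of Corollary~\ref{binary-no-gain}, leaving it as an immediate consequence of Corollary~\ref{cor:adp}, the ordering \eqref{DenCmp}, and the binary formulas for $\bar{R}(\boldsymbol{\lambda})$ and $\bar{R}^*$, which is precisely your argument. Your compactness step to upgrade the pointwise strict inequality $\bar{R}(\boldsymbol{\lambda})<\bar{R}^*$ to $\max_{\boldsymbol{\lambda}}\bar{R}(\boldsymbol{\lambda})<\bar{R}^*$ is a detail the paper glosses over, and your reading of the condition as ``no single action maximizes both divergences'' is the interpretation under which the statement is true (note only that your ``if''/``only if'' labels are swapped relative to the corollary's phrasing).
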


The problem of passive binary hypothesis testing ($K=1$, $M=2$) 
with fixed-length (non-sequential) as well as variable-length (sequential) sample size has been studied 
by~\cite{Blahut74, Csiszar04, Haroutunian07, NitinawaratArxiv}. Our sequentiality gain, in this case, is the manifestation 
of the fact that ``sequential tests are superior in ensuring that both error probabilities decreasing at the best possible exponential rates''~\cite{Csiszar04}. 


Recently, the authors in~\cite{Hayashi09} and~\cite{PolyanskiyITA2011} have studied the problem of active binary hypothesis testing 
for fixed-length and variable-length sample size respectively.
Our work complements the findings in~\cite{PolyanskiyITA2011} by providing
an asymptotic optimal solution in a total cost (and Bayesian) sense as well as establishing 
a non-zero sequentiality and potentially non-zero adaptivity gain. 
In~\cite{Hayashi09}, the error exponent corresponding to the class of NN and NA policies 
were fully characterized for the problem of active binary hypothesis testing with fixed sample size.  In the Bayesian context, the result 
of \cite{Hayashi09} regarding the error exponent of the class of NN policies coincides with our Corollary~\ref{MaxE},
while the full characterization of the error exponent corresponding to the
class of NA policies in \cite{Hayashi09}, strengthens Corollary~\ref{MaxE-NA} 
in the binary case. In particular, it is shown that in the binary hypothesis testing setup
$E_{NN}=E_{NA}$, hence, establishing zero adaptivity gain among non-sequential policies.
For the special case of  channel coding with feedback\footnote{The problem of channel coding with feedback can be interpreted as a special case of active hypothesis testing (See~\cite{Asilomar10} for more details).} with two messages, 
the above result, i.e., the zero adaptivity gain among non-sequential policies,  
was established in~\cite{Berlekamp64, NakibogluPhD}.

\subsection{Numerical Example}
\label{sec:exmp}

Consider the active binary hypothesis testing problem with additive Gaussian noisy observations under two actions $a$ and $b$ shown in Fig.~\ref{fig:binary}. In this example, the observation noise associated with 
actions $a$ and $b$ are such that they add unequal noise to the hypotheses. 
In the remainder of this subsection, we compare the performance of all considered policies for this example.

\begin{figure}[htp]
\centering
\psfrag{x}{\scriptsize{$\theta$}}
\includegraphics[width=0.4\textwidth]{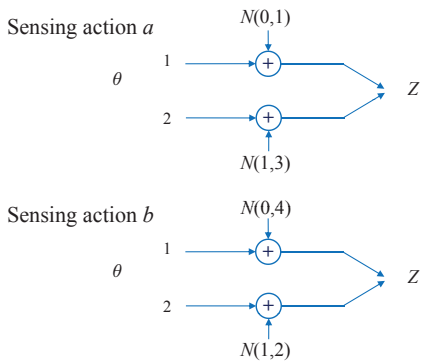}
\caption{Active binary hypothesis testing problem with additive Gaussian noisy observations.}
\label{fig:binary}
\end{figure}

Table~\ref{binary} compares the performance bounds of the considered policies
for the example of Fig.\ref{fig:binary}.

{\scriptsize{
\begin{table}[htp]
\center
\caption{Comparison of performance bounds for the example of Fig.\ref{fig:binary}.}
\label{binary}
\begin{tabular}{ccc}
  \toprule
  & Sequential & Non-sequential \\
  \midrule
  \vspace{0.05 in} 
  Adaptive &  $\log L / 2.98$ & $\lesssim \log L / 1.89$  \\
  \midrule
  \vspace{0.05 in} 
  Non-adaptive & $\log L / 2.27$  & $2 \log L / 1.78 $  \\
  \bottomrule
\end{tabular}
\end{table}
}}

\section{Discussion and Future Work}
\label{Discussion}

In this paper, we considered the problem of active hypothesis testing and we analyzed 
the gain of sequential and adaptive selection of actions.

Our analysis assumes two technical conditions. However, it seems to us that Assumption~\ref{Jump} is for ease of our proofs. As part 
of our future work, we believe that 
standard techniques as in~\cite{Burnashev80, Bentkus08} can be applied to generalize the bounds when 
Assumption~\ref{Jump} does not hold. 
We also note the results obtained in~\cite{Lorden77} and~\cite{Chernoff59, ISIT11} have been shown in~\cite{Dragalin99} and \cite{Sundaresan12}, respectively, to extend to higher moment 
characterization of the optimal (sequential) sample size.  Similar extension in the 
context of sequential and non-adaptive policies seem to follow naturally and
is important area of future investigation.

In our analysis in this paper, we only investigated asymptotic performance in $L$ and
the complementary role of asymptotic analysis in $M$ was neglected. In particular, we have only 
identified the zero-rate characterization of error exponent; while for a full characterization in which 
error exponent is traded off with information acquisition rate, we would need an asymptotic characterization of the problem both in $L$ and $M$.  
Although we have partially addressed this problem in~\cite{HypJournal} for the class of sequential policies, 
the full characterization of the performance bounds in $L$ and $M$ for 
all types of policies defined in this paper remains an important area of future work.

\appendix



\subsection{Theorem~\ref{VNN}, non-sequential non-adaptive policy} \label{appendix:NN}

In this subsection, we show that
\begin{align}
V_{NN} (\boldsymbol{\rho}) &\le 
\frac{\log L - \min \limits_{i,j \in \Omega} \log\frac{\rho_i}{\rho_j}}
{\hat{D}} + o(\log L),\\
\label{VNN01}
V_{NN} (\boldsymbol{\rho}) &\ge 
\frac{\log L - \max \limits_{i,j \in \Omega} \log\frac{\rho_i}{\rho_j}}
{\hat{D}} - o(\log L),
\end{align} 
where
\begin{align}
\label{DenDef}
\hat{D} = \max \limits_{\boldsymbol{\lambda} \in \Lambda(\mathcal{A})} \min \limits_{i \in \Omega}\min \limits_{j \neq i} 
\max \limits_{\alpha \in [0,1]} \sum_{a \in \mathcal{A}} \lambda_a (1-\alpha) D_{\alpha} (q_i^a ||q_j^a).
\end{align}
%
%


Suppose $\hat{\boldsymbol{\lambda}} \in \Lambda(\mathcal{A})$ achieves the maximum in (\ref{DenDef}).
Let $\pi_{NN}$ be a non-sequential non-adaptive policy that collects $\hat{n}$
observation samples and selects sensing actions according to the randomized rule $\hat{\boldsymbol{\lambda}}$.
The expected total cost under this policy is $\hat{n}+ L \Pe$.
Next we find an upper bound for $\Pe$. 
Let $\mathcal{Z}_i(n)=\left\{Z^{n}:\rho_i(n)\ge\rho_j(n) \text{ for all } j \in \Omega \right\}$
and $e_{ij}(n)=P(\{Z^{n}:\rho_i(n)<\rho_j(n)\}|\theta=i)$.
\begin{align}
\label{VNN04}
  \nonumber
	\Pe	&= \sum_{i=1}^M \rho_i P(\cup_{j\neq i}\{Z^{{\hat{n}}}:\rho_i(\hat{n})<\rho_j(\hat{n})\}|\theta=i) \\ \nonumber
	&\le \sum_{i=1}^M \rho_i \sum_{j \neq i} e_{ij}({\hat{n}})\\
	&\le (M-1) \max_{i,j \in \Omega} e_{ij}({\hat{n}}).
\end{align}

From~\eqref{VNN04} and Lemma~\ref{Blahut} in Appendix~\ref{app:facts}, we obtain 
\begin{align*}
\lefteqn{\Pe \le (M-1) \times}\\
&\exp\bigg(-\hat{n} (1-\alpha) \sum \limits_{a \in \mathcal{A}} \hat{\lambda}_a D_{\alpha}(q_i^a||q_j^a) - \min_{i,j\in\Omega} \log\frac{\rho_i}{\rho_j} + o(\hat{n})\bigg).
\end{align*}
We can select $\hat{n}$ as
\begin{align}
\label{VNN03}
\hat{n}=\left(\log L + \log (M-1) - \min \limits_{i,j \in \Omega} \log\frac{\rho_i}{\rho_j} + o(\log L) \right)/\hat{D}
\end{align}
such that $\Pe = O(\frac{1}{L})$, 
and hence,
\begin{align*}
V_{NN} \le {\hat{n}} + L \Pe \le {\hat{n}} + 1 = \frac{\log L - \min \limits_{i,j \in \Omega} \log\frac{\rho_i}{\rho_j}}{\hat{D}} + o(\log L).
\end{align*}
This completes the proof of upper bound.
Next the proof of lower bound is given. 

%
Consider a policy $\pi_{NN}$ that collects $n$
observation samples according to $\boldsymbol{\lambda} \in \Lambda(\mathcal{A})$.
%
We have
\begin{align}
\label{VNNU01}
\nonumber
	\Pe	&= \sum_{i=1}^M \rho_i P(\cup_{j\neq i}\{Z^{n}:\rho_i(n)<\rho_j(n)\}|\theta=i) \\
	&\ge \rho_i e_{ij} + \rho_j e_{ji} \ \ \text{ for any } i,j \in \Omega.
\end{align}
From \eqref{VNNU01} and Lemma~\ref{Blahut} in Appendix~\ref{app:facts}, 
a lower bound is obtained for the expected total cost under policy $\pi_{NN}$.
The lower bound for $V_{NN}$ is obtained by minimizing over the choices of $n$ and~$\boldsymbol{\lambda}$.

\subsection{Theorem~\ref{VSN}, sequential non-adaptive policy} 
\label{appendix:SN}

In this subsection, we show that
\begin{align}
\label{VSN01}
V_{SN} (\boldsymbol{\rho}) &\le \min \limits_{\boldsymbol{\lambda} \in \Lambda(\mathcal{A})}
\sum_{i=1}^{M} \rho_i \frac{\log L - \min \limits_{k \neq i} \log\frac{\rho_i}{\rho_k}}{R(i,\boldsymbol{\lambda})} + o(\log L),\\
\label{VSN02}
V_{SN} (\boldsymbol{\rho}) &\ge \min \limits_{\boldsymbol{\lambda} \in \Lambda(\mathcal{A})}
\sum_{i=1}^{M} \rho_i \frac{\log L - \max \limits_{k \neq i} \log\frac{\rho_i}{\rho_k}}{R(i,\boldsymbol{\lambda})} - o(\log L).
\end{align}


In contrast to the passive case, the observations in the active case (either adaptive or non-adaptive)
are not necessarily identical over time.
Therefore the analysis of~\cite{Lorden77} for sequential passive hypothesis testing (which is based on the law of large number and results for random walks) is not applicable to the problem of sequential non-adaptive hypothesis testing.

Suppose $\hat{\boldsymbol{\lambda}} \in \Lambda(\mathcal{A})$ achieves the minimum in (\ref{VSN01}).
The upper bound \eqref{VSN01} is achieved by a policy that selects sensing actions according to $\hat{\boldsymbol{\lambda}}$
and stops sampling at $$\tau:=\min\{n:\max \limits_{i \in \Omega} \rho_i(n) \ge 1-L^{-1}\}.$$
Let $\tau_i$, $i\in\Omega$, be Markov stopping times defined as follows:
\begin{align}
\label{Deftaui}
\tau_i &:=\min \left\{ n: \min_{j \neq i} \frac{\rho_i(n)}{\rho_j(n)} \ge \frac{1-L^{-1}}{L^{-1}/(M-1)} \right\}.
\end{align}
Note that by definition
\begin{align*}
(M-1) \rho_i(\tau_i) &\ge \sum_{j\neq i} \rho_j(\tau_i) \frac{1-L^{-1}}{L^{-1}/(M-1)}\\
&= (M-1) (1-\rho_i(\tau_i)) \frac{1-L^{-1}}{L^{-1}}.
\end{align*}
This implies that $\rho_i(\tau_i) \ge 1-L^{-1}$ and hence, $\tau \le \tau_i$ for all $i\in\Omega$.
From~\eqref{Obj2}, total cost under the above policy can be written as 
\begin{align}
\nonumber
\label{UBVpi}
V(\boldsymbol{\rho}) &= \mathbb{E}[\tau] + L [1-\max_{j\in\Omega} \rho_j(\tau)]\\
&\le \mathbb{E}[\tau] + 1 \nonumber \\
&= \sum_{i=1}^{M} \rho_i \mathbb{E}[\tau | \theta=i] + 1 \nonumber \\
& \le \sum_{i=1}^{M} \rho_i \mathbb{E}[\tau_i | \theta=i] + 1,
\end{align}
where $\boldsymbol{\rho}=[\rho_1,\rho_2, \ldots, \rho_M]=[\rho_1(0),\rho_2(0), \ldots \rho_M(0)]$ 
and the last inequality follows from the fact that $\tau \leq \tau_i$, $\forall i\in\Omega$. 

Next we find an upper bound for $\mathbb{E}[\tau_i | \theta=i]$, $i\in\Omega$.
Before we proceed, we introduce the following notation to facilitate the proof:
\begin{align*}
T_i &:= \log\frac{1-{L^{-1}}}{L^{-1}/(M-1)} - \min \limits_{k \neq i} \log\frac{\rho_i}{\rho_k}.
\end{align*}
%
Let $\iota := (\log L)^{-\frac{1}{4}}$. We have
\begin{align}
\label{UBtaui}
\nonumber
\mathbb{E} [ \tau_i | \theta=i ] 
&= \sum_{n=0}^{\infty} P(\{\tau_i > n\} | \theta=i)\\
\nonumber
&\le 1 + \frac{T_i}{R(i,\hat{\boldsymbol{\lambda}})}(1+\iota) 
+ \hspace*{-0.2in} \sum_{n: n> \frac{T_i}{R(i,\hat{\boldsymbol{\lambda}})}(1+\iota)} \hspace*{-0.29in} P(\{\tau_i > n\} | \theta=i)\\
\nonumber
&\stackrel{(a)}{\le} \frac{T_i}{R(i,\hat{\boldsymbol{\lambda}})} + o(\log L)\\ 
&\le \frac{\log L - \min \limits_{k \neq i} \log\frac{\rho_i}{\rho_k}}{R(i,\hat{\boldsymbol{\lambda}})} + o(\log L),
\end{align}where inequality $(a)$ follows from the fact that $\iota =(\log L)^{-\frac{1}{4}}$ and by Lemma~\ref{expdecay} in Appendix~\ref{app:facts}. 
Now from~\eqref{UBVpi} and~\eqref{UBtaui}, we have the assertion of the theorem.

Next we provide the proof of lower bound~\eqref{VSN02} which follows closely the proof of Theorem~2 in~\cite{Chernoff59}.

From upper bound~\eqref{VSN01} we know that the total cost under the optimal policy is $O(\log L)$.
This implies that the $\Pe$ of the optimal policy is $O(\frac{\log L}{L})$.
Hence, without loss of generality in our computation of the lower bound, 
we can restrict the set of policies to those whose average probability of making an error is $O(\frac{\log L}{L})$.   

Let $\pi_{SN}$ denote a sequential policy that
selects sensing actions according to $\boldsymbol{\lambda} \in \Lambda(\mathcal{A})$
and stops sampling whenever $\Pe \le \epsilon$.
For all $i \in \Omega$, let 
\begin{align}
\label{VSNL00}
T_i := (1-\delta) \frac{\log\frac{1}{\epsilon} - \max \limits_{k \neq i} \log\frac{\rho_i}{\rho_k}}{R(i,\boldsymbol{\lambda})+\delta}.
\end{align}

Under policy $\pi_{SN}$,
\begin{align}
\label{VSNL02}
\nonumber
\lefteqn{P(\left\{\tau < T_i\right\} | \theta=i)}\\
\nonumber
&= P\left(\left\{\tau < T_i\right\} \cap \bigcap_{j\neq i}\left\{\frac{\rho_i(\tau)}{\rho_j(\tau)} \ge (\frac{1}{\epsilon})^{1-\delta}\right\} | \theta=i\right) \\
\nonumber
&\hspace{.15in} + P\left(\left\{\tau < T_i\right\} \cap \bigcup_{j\neq i} \left\{\frac{\rho_i(\tau)}{\rho_j(\tau)} < (\frac{1}{\epsilon})^{1-\delta}\right\} | \theta=i\right) \\ 
\nonumber
&\stackrel{(a)}{\le} \frac{(\log\xi)^2}{T_i \delta^2} + \sum_{j\neq i} 
P\left( \left\{\frac{\rho_i(\tau)}{\rho_j(\tau)} < (\frac{1}{\epsilon})^{1-\delta}\right\} | \theta=i\right) \\
&\stackrel{(b)}{\le} \frac{(\log\xi)^2}{T_i \delta^2} + (M-1) \epsilon^{\delta} \left(\frac{1}{\rho_i} + \frac{1}{\min_{j\neq i} \rho_j}\right),
\end{align}
where $(a)$ follows from Lemma~\ref{B1event} in Appendix~\ref{app:facts} and the union bound;
and $(b)$ follows from Lemma~\ref{B2event} in Appendix~\ref{app:facts}.

The expected total cost under policy $\pi_{SN}$ is lower bounded as
\begin{align*}
 \Expt[\tau] + L \Pe &\ge \sum_{i=1}^M \rho_i \Expt[\tau | \theta=i] \\
 &= \sum_{i=1}^M \rho_i \Expt[\tau \indc_{\{\tau\ge T_i\}} + \tau \indc_{\{\tau< T_i\}} | \theta=i] \\
 &\ge \sum_{i=1}^M \rho_i T_i P(\tau\ge T_i | \theta=i) \\
 &\ge \sum_{i=1}^M \rho_i T_i \Big(1 - \frac{(\log\xi)^2}{T_i \delta^2} - \frac{2 \epsilon^{\delta} M}{\min_{j\in\Omega} \rho_j}\Big).
\end{align*}
For $\delta=(\log\frac{1}{\epsilon})^{-\frac{1}{4}}$, the lower bound simplifies to 
\begin{align*}
 \Expt[\tau] + L \Pe 
 &\ge \sum_{i=1}^M \rho_i \frac{\log\frac{1}{\epsilon} - \max \limits_{k \neq i} \log\frac{\rho_i}{\rho_k}}{R(i,\boldsymbol{\lambda})} - o(\log\frac{1}{\epsilon})\\
 &\ge \sum_{i=1}^M \rho_i \frac{\log L - \max \limits_{k \neq i} \log\frac{\rho_i}{\rho_k}}{R(i,\boldsymbol{\lambda})} - o(\log L),
\end{align*}
where the last inequality follows from the fact that for an optimal policy, $\epsilon=O(\frac{\log L}{L})$.
The lower bound for $V_{SN}$ is obtained
by minimizing over the choice of $\boldsymbol{\lambda}$.


\subsection{Theorem~\ref{VSA}, sequential adaptive policy}
\label{appendix:SA}

We have
\begin{align}
\label{VSAub}
V_{SA} (\boldsymbol{\rho}) &\le 
\sum_{i=1}^{M} \rho_i \frac{\log L - \min \limits_{k \neq i} \log\frac{\rho_i}{\rho_k}}{R(i,\boldsymbol{\lambda}^*_i)} + o(\log L),\\
\label{VSAlb}
V_{SA} (\boldsymbol{\rho}) &\ge
\sum_{i=1}^{M} \rho_i \frac{\log L - \max \limits_{k \neq i} \log\frac{\rho_i}{\rho_k}}{R(i,\boldsymbol{\lambda}^*_i)} - o(\log L).
\end{align} 
%
%
The upper bound was proved in~\cite[Prop.~3]{HypJournal}. 
The proof of the lower bound relies on a generalization of Theorem~2 in~\cite{Chernoff59} and is provided next. 

From upper bound~\eqref{VSAub} we know that the total cost under the optimal policy is $O(\log L)$.
This implies that the error probability $\Pe$ of the optimal policy is $O(\frac{\log L}{L})$.

Let $\pi_{SA}$ denote a sequential policy that
stops sampling whenever $\Pe \le \epsilon$.
For all $i \in \Omega$, let 
\begin{align}
\label{VSAL00}
T^*_i := (1-\delta) \frac{\log\frac{1}{\epsilon} - \max \limits_{k \neq i} \log\frac{\rho_i}{\rho_k}}{R(i,\boldsymbol{\lambda}^*_i)+\delta}.
\end{align}

Under policy $\pi_{SA}$,
\begin{align}
\label{VSAL02}
\nonumber
\lefteqn{P(\left\{\tau < T^*_i\right\} | \theta=i)}\\
\nonumber
&= P\left(\left\{\tau < T^*_i\right\} \cap \bigcap_{j\neq i}\left\{\frac{\rho_i(\tau)}{\rho_j(\tau)} \ge (\frac{1}{\epsilon})^{1-\delta}\right\} | \theta=i\right) \\
\nonumber
&\hspace{.15in} + P\left(\left\{\tau < T^*_i\right\} \cap \bigcup_{j\neq i} \left\{\frac{\rho_i(\tau)}{\rho_j(\tau)} < (\frac{1}{\epsilon})^{1-\delta}\right\} | \theta=i\right) \\ 
\nonumber
&\stackrel{(a)}{\le} \frac{(\log\xi)^2}{T^*_i \delta^2} + \sum_{j\neq i} 
P\left( \left\{\frac{\rho_i(\tau)}{\rho_j(\tau)} < (\frac{1}{\epsilon})^{1-\delta}\right\} | \theta=i\right) \\
&\stackrel{(b)}{\le} \frac{(\log\xi)^2}{T^*_i \delta^2} + (M-1) \epsilon^{\delta} \left(\frac{1}{\rho_i} + \frac{1}{\min_{j\neq i} \rho_j}\right),
\end{align}
where $(a)$ follows from Lemma~\ref{B3event} in Appendix~\ref{app:facts} and the union bound;
and $(b)$ follows from Lemma~\ref{B2event} in Appendix~\ref{app:facts}.

The expected total cost under policy $\pi_{SA}$ is lower bounded as
\begin{align*}
 \Expt[\tau] + L \Pe &\ge \sum_{i=1}^M \rho_i \Expt[\tau | \theta=i] \\
 &= \sum_{i=1}^M \rho_i \Expt[\tau \indc_{\{\tau\ge T^*_i\}} + \tau \indc_{\{\tau< T^*_i\}} | \theta=i] \\
 &\ge \sum_{i=1}^M \rho_i T^*_i P(\tau\ge T^*_i | \theta=i) \\
 &\ge \sum_{i=1}^M \rho_i T^*_i \Big(1 - \frac{(\log\xi)^2}{T^*_i \delta^2} - \frac{2 \epsilon^{\delta} M}{\min_{j\in\Omega} \rho_j}\Big).
\end{align*}
For $\delta=(\log\frac{1}{\epsilon})^{-\frac{1}{4}}$, the lower bound simplifies to 
\begin{align*}
 \Expt[\tau] + L \Pe 
 &\ge \sum_{i=1}^M \rho_i \frac{\log\frac{1}{\epsilon} - \max \limits_{k \neq i} \log\frac{\rho_i}{\rho_k}}{R(i,\boldsymbol{\lambda}^*_i)} - o(\log\frac{1}{\epsilon})\\
 &\ge \sum_{i=1}^M \rho_i \frac{\log L - \max \limits_{k \neq i} \log\frac{\rho_i}{\rho_k}}{R(i,\boldsymbol{\lambda}^*_i)} - o(\log L),
\end{align*}
where the last inequality follows from the fact that for an optimal policy, $\epsilon=O(\frac{\log L}{L})$.


\begin{remarks}
The result above is in agreement with Theorem~2 in~\cite{Chernoff59}
and Theorem~4 in~\cite{NitinawaratArxiv}.
\end{remarks}

\subsection{Proposition~\ref{VNA}, non-sequential adaptive policy}

In this subsection, we show that
\begin{align*}
V_{NA} (\boldsymbol{\rho}) &\ge
\frac{\log L - \max \limits_{k \neq i} \log\frac{\rho_i}{\rho_k}}
{\min \limits_{i \in \Omega} \max \limits_{\boldsymbol{\lambda} \in \Lambda(\mathcal{A})} R(i,\boldsymbol{\lambda})} - o(\log L).
\end{align*} 

\begin{IEEEproof}

Let $\pi_{NA}$ be a non-sequential adaptive policy that collects $n$
observation samples.
%
Consider an arbitrary $\delta>0$ and let 
$$\epsilon_i=\frac{1}{\exp\left(n(R(i,\boldsymbol{\lambda}^*_i) + \delta) 
+ \max \limits_{k \neq i} \log\frac{\rho_i}{\rho_k}\right)+1}.$$

We have
\begin{align}
\label{VNNL01}
	\Pe 
		 &\ge \sum_{i=1}^M \rho_i \Expt[1-\rho_i(n)| \theta=i] P(\mathcal{Z}_i|\theta=i),
\end{align}
where
\begin{align}		 
\label{VNNL02}
		 \Expt[1-\rho_i(n)| \theta=i] &\ge \epsilon_i P(1-\rho_i(n) \ge \epsilon_i | \theta=i).
\end{align}
%
Let $\hat{j} = \argmin \limits_{j \neq i} \sum_{t=0}^{n-1} \Expt[\log\frac{q_i^{A(t)}(Z)}{q_j^{A(t)}(Z)}|\theta=i]$
where actions $\{A(t)\}_{t=0}^{n-1}$ are selected according to $\pi_{NA}$.
%
\begin{align}
\label{VNNL03}
\nonumber
		 \lefteqn{P(1-\rho_i(n) < \epsilon_i | \theta=i)}\\
\nonumber
		 &= P\left(\log\frac{\rho_i(n)}{1-\rho_i(n)} > \log\frac{1-\epsilon_i}{\epsilon_i} | \theta=i\right) \\
\nonumber
		 &\le P\left(\cap_{j\neq i} \left\{\log\frac{\rho_i(n)}{\rho_j(n)} > \log\frac{1-\epsilon_i}{\epsilon_i}\right\} | \theta=i\right)\\
\nonumber
     &\stackrel{(a)}{\le} P\left(\left\{\log\frac{\rho_i(n)}{\rho_{\hat{j}}(n)} - \Expt[\log\frac{\rho_i(n)}{\rho_{\hat{j}}(n)}] \right. \right.\\
\nonumber
     &\hspace*{0.55in} \left. \left. > \log\frac{1-\epsilon_i}{\epsilon_i} - \max \limits_{k \neq i} \log\frac{\rho_i}{\rho_k} - n R(i,\boldsymbol{\lambda}_i^*) \right\} | \theta=i \right) \\
\nonumber
		 &\le P\left(\log\frac{\rho_i(n)}{\rho_{\hat{j}}(n)} - \Expt[\log\frac{\rho_i(n)}{\rho_{\hat{j}}(n)}] > n \delta | \theta=i\right) \\
		 &\stackrel{(b)}{\le} \exp(-n\delta^2/(\log\xi)^2),
\end{align} 
where $(a)$ follows from the fact that given $\{\theta=i\}$,
\begin{align}
\label{Edrift}
\nonumber
\Expt[\log\frac{\rho_i(n)}{\rho_{\hat{j}}(n)}]
&= \log\frac{\rho_i}{\rho_{\hat{j}}} + \sum_{t=0}^{n-1} \Expt[\log\frac{\rho_i(t+1)}{\rho_{\hat{j}}(t+1)} - \log\frac{\rho_i(t)}{\rho_{\hat{j}}(t)}]\\
\nonumber
&= \log\frac{\rho_i}{\rho_{\hat{j}}} + \sum_{t=0}^{n-1} \Expt[\log\frac{q_i^{A(t)}(Z)}{q_{\hat{j}}^{A(t)}(Z)}]\\
&\le \max \limits_{k \neq i} \log\frac{\rho_i}{\rho_k} + n \min \limits_{j \neq i} \sum_{a \in \mathcal{A}} \lambda^*_{ia} D(q_i^{a}||q_j^{a}),  
\end{align} 
and $(b)$ follows from Fact~\ref{McDiarmid}.

Similarly, it can be shown that
\begin{align}
	\label{VNNL04}
  P(\mathcal{Z}_i^c|\theta=i) \le \exp(-n(R(i,\boldsymbol{\lambda}^*_i))^2/(\log\xi)^2).
\end{align}

Combining (\ref{VNNL01})--(\ref{VNNL04}) and minimizing the bound over $n$,
we have the assertion of the proposition.
%
%
\end{IEEEproof}

\subsection{Technical Background}
\label{app:facts}

In this appendix, we provide some preliminary facts and lemmas
which are technical and only helpful in proving the main results of the paper.

\begin{fact}[Kolmogorov's Maximal Inequality~\cite{Billingsley}]
\label{Kolmogorov}
Suppose $X_t$ for $t=1,2,\ldots$, be independent random variables with $\mathbb{E}[X_t]=0$ and $Var(X_t)<\infty$.
Let $S_n=\sum_{t=1}^{n} X_t$. Then
$$P\left(\max_{0\le n \le N} |S_n| > x \right) \le \frac{Var(S_N)}{x^2} = \frac{\sum_{t=1}^N Var(X_t)}{x^2}.$$
\end{fact}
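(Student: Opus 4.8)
The plan is to give the standard first-passage-time argument; only the case $x>0$ needs treatment (when $x\le 0$ the statement is trivial, since $|S_0|=0$). First I would decompose the event $B:=\{\max_{0\le n\le N}|S_n|>x\}$ — which, because $|S_0|=0\le x$, equals $\{\max_{1\le n\le N}|S_n|>x\}$ — according to the first index at which the running sum leaves the interval $[-x,x]$: for $k=1,\dots,N$ set
\[
A_k:=\{|S_1|\le x,\ \dots,\ |S_{k-1}|\le x,\ |S_k|>x\}.
\]
These events are pairwise disjoint, $B=\bigcup_{k=1}^N A_k$, and — the bookkeeping point that makes the proof work — $\mathbf{1}_{A_k}$ and $S_k\mathbf{1}_{A_k}$ are measurable functions of $(X_1,\dots,X_k)$ only.

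Next I would bound the second moment of $S_N$ from below by restricting to $B$ and expanding $S_N=S_k+(S_N-S_k)$ on each piece. Since $\mathbb{E}[S_N]=\sum_t\mathbb{E}[X_t]=0$, we have $Var(S_N)=\mathbb{E}[S_N^2]$, and
\begin{align*}
Var(S_N) &\ge \sum_{k=1}^N \mathbb{E}\big[S_N^2\,\mathbf{1}_{A_k}\big]\\
&= \sum_{k=1}^N\Big(\mathbb{E}[S_k^2\mathbf{1}_{A_k}] + 2\,\mathbb{E}[S_k(S_N-S_k)\mathbf{1}_{A_k}] + \mathbb{E}[(S_N-S_k)^2\mathbf{1}_{A_k}]\Big).
\end{align*}
The crucial observation is that each cross term vanishes: $S_N-S_k=X_{k+1}+\cdots+X_N$ is independent of $(X_1,\dots,X_k)$ and has mean zero, so $\mathbb{E}[S_k(S_N-S_k)\mathbf{1}_{A_k}]=\mathbb{E}[S_k\mathbf{1}_{A_k}]\cdot\mathbb{E}[S_N-S_k]=0$. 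Dropping the nonnegative terms $\mathbb{E}[(S_N-S_k)^2\mathbf{1}_{A_k}]$ and using $S_k^2>x^2$ on $A_k$ gives
\begin{align*}
Var(S_N) \ge \sum_{k=1}^N \mathbb{E}[S_k^2\mathbf{1}_{A_k}] \ge x^2\sum_{k=1}^N P(A_k) = x^2\,P(B),
\end{align*}
which rearranges to the claimed inequality; the trailing identity $Var(S_N)=\sum_{t=1}^N Var(X_t)$ is just independence of the $X_t$.

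The only genuinely non-mechanical step — and hence the one I would be most careful about — is justifying that the cross terms vanish, i.e., the measurability of $S_k\mathbf{1}_{A_k}$ with respect to $\sigma(X_1,\dots,X_k)$ together with the independence and zero mean of the future increment $S_N-S_k$; the remaining manipulations are routine. As an alternative route one could instead invoke Doob's maximal inequality for the nonnegative submartingale $(S_n^2)_{n\le N}$, but the self-contained computation above avoids importing martingale machinery.
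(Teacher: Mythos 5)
Your argument is correct: the first-passage decomposition into the disjoint events $A_k$, the vanishing of the cross terms via independence and zero mean of $S_N-S_k$, and the lower bound $S_k^2>x^2$ on $A_k$ together give exactly the claimed inequality. Note that the paper does not prove this statement at all --- it imports it as a Fact with a citation to Billingsley --- and the proof you supply is precisely the standard textbook argument found in that reference, so there is nothing to reconcile; the one step worth stating as carefully as you did is the integrability and measurability bookkeeping that justifies factoring $\mathbb{E}[S_k\mathbf{1}_{A_k}(S_N-S_k)]=\mathbb{E}[S_k\mathbf{1}_{A_k}]\,\mathbb{E}[S_N-S_k]=0$.
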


\begin{fact}[McDiarmid's Inequality~\cite{McDiarmid89}]
\label{McDiarmid}
Let $\mathbf{X} = (X_1, \ldots, X_n)$ be a family of independent random variables with $X_k$ taking values in a set $\mathcal{X}_k$ for each $k$. Suppose a real-valued function $f$ defined on $\Pi_{k=1}^n \mathcal{X}_k$ satisfies 
$|f(\mathbf{x}) - f(\mathbf{x}') | \le c_k$,
whenever the vectors $\mathbf{x}$ and $\mathbf{x}'$ only differ in the $k$-th coordinate.
Then for any $\nu > 0$,
\begin{align*}
	P(f(\mathbf{X}) - \mathbb{E}[f(\mathbf{X})] \ge  \nu) &\le e^{-2 \nu^2 / \sum_{k=1}^n c_k^2},\\
	P(f(\mathbf{X}) - \mathbb{E}[f(\mathbf{X})] \le -\nu) &\le e^{-2 \nu^2 / \sum_{k=1}^n c_k^2}.
\end{align*}
\end{fact}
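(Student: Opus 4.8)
The plan is to prove this via the Doob martingale (the \emph{method of bounded differences}), following the classical Azuma--Hoeffding route. First I would introduce the filtration generated by the coordinates and define, for $k = 0, 1, \ldots, n$, the conditional expectations $Z_k := \mathbb{E}[f(\mathbf{X}) \mid X_1, \ldots, X_k]$, so that $Z_0 = \mathbb{E}[f(\mathbf{X})]$ and $Z_n = f(\mathbf{X})$. The increments $D_k := Z_k - Z_{k-1}$ form a martingale-difference sequence (each has conditional mean zero given $X_1, \ldots, X_{k-1}$), and the target deviation telescopes as $f(\mathbf{X}) - \mathbb{E}[f(\mathbf{X})] = \sum_{k=1}^n D_k$. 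The whole argument then reduces to controlling the conditional moment generating function of each $D_k$ and invoking a Chernoff bound on the sum.

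The key structural step is to show that, conditioned on $X_1, \ldots, X_{k-1}$, the increment $D_k$ lies in an interval of length at most $c_k$. Using the independence of the coordinates, I would write both $Z_k$ and $Z_{k-1}$ as averages of $f$ over the remaining (independent) coordinates $X_{k+1}, \ldots, X_n$, with $Z_k$ retaining the realized value of $X_k$ and $Z_{k-1}$ averaging it out; because replacing the $k$-th coordinate alters $f$ by at most $c_k$, the largest and smallest achievable values of $D_k$ (over choices of $X_k$, holding the past fixed) differ by no more than $c_k$. This is precisely where the bounded-differences hypothesis and the product structure of the law both enter, and I expect this conditional range bound to be the most delicate step to state cleanly.

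With the conditional range bound in hand, I would invoke Hoeffding's lemma: a random variable of conditional mean zero taking values in an interval of length $c_k$ satisfies $\mathbb{E}[e^{s D_k} \mid X_1, \ldots, X_{k-1}] \le e^{s^2 c_k^2 / 8}$ for every $s \in \mathbb{R}$. Peeling the conditioning one coordinate at a time via the tower property then yields $\mathbb{E}\big[e^{s \sum_{k=1}^n D_k}\big] \le e^{s^2 \sum_{k=1}^n c_k^2 / 8}$. A Chernoff bound gives $P\big(\sum_k D_k \ge \nu\big) \le e^{-s\nu + s^2 \sum_k c_k^2 / 8}$ for all $s > 0$, and optimizing by taking $s = 4\nu / \sum_k c_k^2$ produces the stated bound $e^{-2\nu^2 / \sum_{k=1}^n c_k^2}$. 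The lower-tail inequality follows immediately by applying the upper-tail result to $-f$, which obeys the same bounded-differences constants $c_k$.
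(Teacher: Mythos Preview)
Your proof is correct and follows the classical Doob-martingale/Azuma--Hoeffding route to McDiarmid's inequality. The paper itself does not prove this statement: it is stated as a \emph{Fact} in the technical-background appendix and simply attributed to \cite{McDiarmid89}, so there is no ``paper's own proof'' to compare against. Your argument is precisely the standard one from that reference, and every step---the Doob martingale, the conditional range bound on $D_k$ via independence and the bounded-differences hypothesis, Hoeffding's lemma giving $e^{s^2 c_k^2/8}$, the tower-property peel, and the Chernoff optimization at $s = 4\nu/\sum_k c_k^2$---is carried out correctly.
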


\begin{lemma}
\label{Blahut}
Consider a policy that collects observation samples according to a randomized rule $\boldsymbol{\lambda}$. Under this policy and for all $i,j \in \Omega$, and $\alpha \in [0,1]$,  
\begin{align*}
\max\left\{e_{ij}(n),e_{ji}(n)\right\} &\le \exp\bigg(-n (1-\alpha) \sum \limits_{a \in \mathcal{A}} \lambda_a D_{\alpha}(q_i^a||q_j^a)\\ &\hspace*{.37in} - \min\big\{\log\frac{\rho_i}{\rho_j}, \log\frac{\rho_j}{\rho_i}\big\} + o(n)\bigg),\\
\max\left\{e_{ij}(n),e_{ji}(n)\right\} &\ge \exp\bigg(-n (1-\alpha) \sum \limits_{a \in \mathcal{A}} \lambda_a D_{\alpha}(q_i^a||q_j^a)\\ &\hspace*{.34in} - \max\big\{ \log\frac{\rho_i}{\rho_j}, \log\frac{\rho_j}{\rho_i}\big\} - o(n) \bigg).
\end{align*}
\end{lemma}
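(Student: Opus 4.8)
The plan is to follow the classical Chernoff-style argument for binary hypothesis testing, generalized to the prior-weighted and non-identically-distributed setting. First I would reduce everything to the log-likelihood ratio: writing $S_n := \sum_{t=0}^{n-1}\log\frac{q_i^{A(t)}(Z(t))}{q_j^{A(t)}(Z(t))}$ and using Bayes' rule, $\rho_i(n)/\rho_j(n) = (\rho_i/\rho_j)e^{S_n}$, so $e_{ij}(n) = P(S_n < \log\frac{\rho_j}{\rho_i}\mid\theta=i)$ and $e_{ji}(n) = P(S_n > \log\frac{\rho_j}{\rho_i}\mid\theta=j)$. Since the policy is non-adaptive, the action sequence is independent of the observations given $\theta$; conditioning on it, I may work with a deterministic schedule in which action $a$ is used $n_a$ times, $\sum_a n_a = n$ and $n_a = n\lambda_a + o(n)$. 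Assumption~\ref{Jump} guarantees $\bigl|\log\frac{q_i^a(z)}{q_j^a(z)}\bigr|\le\log\xi$, i.e.\ bounded per-step increments, which is precisely what makes the concentration estimates below go through and marks the departure from the i.i.d.\ passive case.

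For the upper bound I would apply the Chernoff bound with conjugate parameters to the two error types. For $e_{ij}(n)$, Markov's inequality applied to $e^{-(1-\alpha)S_n}$ gives $e_{ij}(n)\le (\rho_j/\rho_i)^{1-\alpha}\,\mathbb{E}_i[e^{-(1-\alpha)S_n}] = (\rho_j/\rho_i)^{1-\alpha}\prod_a\bigl(\int q_i^a(z)^{\alpha}q_j^a(z)^{1-\alpha}\,dz\bigr)^{n_a}$, and by the definition of Rényi divergence $\int q_i^a(z)^{\alpha}q_j^a(z)^{1-\alpha}\,dz = e^{-(1-\alpha)D_\alpha(q_i^a\|q_j^a)}$; a symmetric computation with Chernoff parameter $\alpha$ yields $e_{ji}(n)\le (\rho_i/\rho_j)^{\alpha}\exp(-(1-\alpha)\sum_a n_a D_\alpha(q_i^a\|q_j^a))$, with the same exponential factor. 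Bounding both prior prefactors by $\max\{\rho_i/\rho_j,\rho_j/\rho_i\} = \exp(-\min\{\log\frac{\rho_i}{\rho_j},\log\frac{\rho_j}{\rho_i}\})$, and replacing $n_a$ by $n\lambda_a$ at a cost of $o(n)$ in the exponent (using $D_\alpha(q_i^a\|q_j^a)\le\log\xi$ from Assumption~\ref{Jump}), yields the claimed upper bound on $\max\{e_{ij}(n),e_{ji}(n)\}$, valid for every $\alpha\in[0,1]$.

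For the lower bound I would tilt to the geometric mixture $r_\alpha^a(z)\propto q_i^a(z)^{1-\alpha}q_j^a(z)^{\alpha}$ and set $r_\alpha^{(n)} := \prod_{t}r_\alpha^{A(t)}$. The key point is that $\frac1n\log\frac{r_\alpha^{(n)}}{P_i^{(n)}}$ and $\frac1n\log\frac{r_\alpha^{(n)}}{P_j^{(n)}}$ are averages of bounded independent increments, so by Kolmogorov's maximal inequality (Fact~\ref{Kolmogorov}) they concentrate around $\bar\mu_i := \frac1n\sum_a n_a D(r_\alpha^a\|q_i^a)$ and $\bar\mu_j := \frac1n\sum_a n_a D(r_\alpha^a\|q_j^a)$; hence there is a set $\mathcal{T}_n$ with $r_\alpha^{(n)}(\mathcal{T}_n)\ge\frac12$ on which $P_i^{(n)}\ge r_\alpha^{(n)}e^{-n(\bar\mu_i+\epsilon)}$ and $P_j^{(n)}\ge r_\alpha^{(n)}e^{-n(\bar\mu_j+\epsilon)}$. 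Since $\mathcal{T}_n$ is split by the MAP regions $\mathcal{R}_i,\mathcal{R}_j$, one of $r_\alpha^{(n)}(\mathcal{T}_n\cap\mathcal{R}_i),r_\alpha^{(n)}(\mathcal{T}_n\cap\mathcal{R}_j)$ is at least $\frac14$, giving either $e_{ji}(n)\ge\frac14 e^{-n(\bar\mu_j+\epsilon)}$ or $e_{ij}(n)\ge\frac14 e^{-n(\bar\mu_i+\epsilon)}$. Taking $\alpha$ to be the value that balances $\bar\mu_i=\bar\mu_j$ — equivalently the maximizer of $(1-\alpha)\sum_a\lambda_a D_\alpha(q_i^a\|q_j^a)$ — makes the common value equal to $(1-\alpha)\sum_a\lambda_a D_\alpha(q_i^a\|q_j^a)+o(1)$, via the identity $(1-\alpha)D_\alpha(q\|q') = \alpha D_{1-\alpha}(q'\|q)$, while the $O(1)$ threshold $\log\frac{\rho_j}{\rho_i}$, which merely shifts $\mathcal{R}_i,\mathcal{R}_j$, produces the $-\max\{\log\frac{\rho_i}{\rho_j},\log\frac{\rho_j}{\rho_i}\}$ correction. (For a quicker but non-tight instance at $\alpha=\frac12$, Cauchy--Schwarz on each MAP region already gives $\sqrt{e_{ij}(n)}+\sqrt{e_{ji}(n)}\ge\int\sqrt{P_i^{(n)}P_j^{(n)}} = \exp(-\sum_a n_a\tfrac12 D_{1/2}(q_i^a\|q_j^a))$.)

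The main obstacle is the converse direction: unlike in \cite{Blahut74}, the observations here are independent but not identically distributed, so the large-deviations/typical-set estimates must be carried out for sums of non-identical bounded summands, which forces the use of Assumption~\ref{Jump} together with a maximal/concentration inequality in place of the usual i.i.d.\ law of large numbers. A secondary but persistent annoyance is bookkeeping which tilting/Chernoff parameter produces $D_\alpha(q_i^a\|q_j^a)$ rather than $D_\alpha(q_j^a\|q_i^a)$, which I would reconcile throughout by the identity $(1-\alpha)D_\alpha(q\|q') = \alpha D_{1-\alpha}(q'\|q)$; and handling the $o(n)$ discrepancy between the realized action counts $n_a$ and $n\lambda_a$, which is negligible because the per-action divergences are bounded.
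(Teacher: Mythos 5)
Your argument is, in substance, the proof the paper intends: Lemma~\ref{Blahut} is justified there only by the remark that it ``follows closely the proof of Theorem~9 in~\cite{Blahut74}'', and your reconstruction --- Chernoff bounds with the conjugate exponents $1-\alpha$ and $\alpha$ for the two error types in the achievability direction, and tilting to the geometric mixture $r_\alpha^a \propto (q_i^a)^{1-\alpha}(q_j^a)^{\alpha}$ together with a bounded-increment concentration argument (in place of the i.i.d.\ law of large numbers) for the converse --- is exactly that adaptation to independent but non-identically distributed samples. The identity $\int (q_i^a)^{\alpha}(q_j^a)^{1-\alpha}\,dz = e^{-(1-\alpha)D_\alpha(q_i^a\|q_j^a)}$, the balancing $\bar\mu_i=\bar\mu_j$ at the maximizing $\alpha$, and the absorption of the prior thresholds into the $\min/\max$ of $\log\frac{\rho_i}{\rho_j}$ and $\log\frac{\rho_j}{\rho_i}$ are all handled correctly.

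The one step that deserves more care than you give it is the reduction ``conditioning on the action sequence, I may work with a deterministic schedule with $n_a=n\lambda_a+o(n)$.'' If the actions are literally drawn i.i.d.\ from $\boldsymbol{\lambda}$, the unconditional Chernoff bound is $\mathbb{E}_i[e^{-(1-\alpha)S_n}]=\bigl(\sum_a\lambda_a e^{-(1-\alpha)D_\alpha(q_i^a\|q_j^a)}\bigr)^n$, which by Jensen strictly exceeds $e^{-n(1-\alpha)\sum_a\lambda_a D_\alpha(q_i^a\|q_j^a)}$ whenever the per-action exponents differ; since Chernoff is asymptotically tight for i.i.d.\ sums, the exponent in the lemma is then \emph{not} attained --- action compositions that deviate by a constant occur with probability $e^{-\Theta(n)}$ and contribute at the exponential scale, so they cannot be dismissed as an $o(n)$ bookkeeping issue the way sub-linear fluctuations can. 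Your conditional computation actually proves the statement for a deterministic (fixed-composition) action sequence with empirical frequencies $\lambda_a+o(1)$, which is how the lemma must be read for the downstream use in Theorem~\ref{VNN}; the paper is equally silent on this point, so this is a caveat on the lemma's phrasing rather than a defect unique to your proof, but you should either state the fixed-composition reading explicitly or control the large deviations of the action counts.
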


The proof of Lemma~\ref{Blahut} follows closely the proof of Theorem~9 in~\cite{Blahut74}. 


\begin{lemma}
\label{expdecay} 
Given any $\iota > 0$ and for $n > \frac{T_i}{R(i,\hat{\boldsymbol{\lambda}})}(1+\iota)$, 
we have $P(\{\tau_i > n\} | \theta=i) \le (M-1) e^{- b(\iota) n}$ where 
$$b(\iota)=\frac{2 \iota^2}{(1+\iota)^2} \bigg(\frac{R(i,\hat{\boldsymbol{\lambda}})}{2 \log\xi}\bigg)^2.$$
\end{lemma}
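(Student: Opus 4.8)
The plan is to bound the event $\{\tau_i>n\}$ by the event that, at time $n$, the pairwise posterior ratio $\rho_i(n)/\rho_j(n)$ is still below the stopping threshold for at least one competing hypothesis $j\neq i$, and then to control each such event by a concentration bound on the (non-adaptive) log-likelihood-ratio walk. Writing $\gamma:=\frac{1-L^{-1}}{L^{-1}/(M-1)}$, the definition of $\tau_i$ in \eqref{Deftaui} gives
\[
\{\tau_i>n\}\ \subseteq\ \Big\{\min_{j\neq i}\frac{\rho_i(n)}{\rho_j(n)}<\gamma\Big\}\ \subseteq\ \bigcup_{j\neq i}\Big\{\log\frac{\rho_i(n)}{\rho_j(n)}<\log\gamma\Big\},
\]
so by the union bound it suffices to show that $P\big(\log\frac{\rho_i(n)}{\rho_j(n)}<\log\gamma\mid\theta=i\big)\le e^{-b(\iota)n}$ for each fixed $j\neq i$, after which multiplying by $M-1$ yields the claim.

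Under the non-adaptive policy with randomization $\hat{\boldsymbol{\lambda}}$ used in the upper-bound argument, the actions $A(0),\dots,A(n-1)$ are drawn i.i.d.\ from $\hat{\boldsymbol{\lambda}}$ and independently of the observations, so conditioned on $\{\theta=i\}$ the increments $X_t:=\log\frac{q_i^{A(t)}(Z(t))}{q_j^{A(t)}(Z(t))}$ are i.i.d., and Bayes' rule gives $\log\frac{\rho_i(n)}{\rho_j(n)}=\log\frac{\rho_i}{\rho_j}+\sum_{t=0}^{n-1}X_t$. Two properties of these increments drive the estimate: by Assumption~\ref{Jump} each $X_t$ takes values in $[-\log\xi,\log\xi]$, an interval of length $2\log\xi$; and $\mathbb{E}[X_t\mid\theta=i]=\sum_{a\in\mathcal{A}}\hat{\lambda}_a D(q_i^a||q_j^a)\ge R(i,\hat{\boldsymbol{\lambda}})$ by the definition of $R(i,\cdot)$ as the minimum over competing hypotheses. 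Consequently $\{\log\frac{\rho_i(n)}{\rho_j(n)}<\log\gamma\}\subseteq\{\sum_{t=0}^{n-1}X_t<\log\gamma-\log\frac{\rho_i}{\rho_j}\}$, and since $\log\gamma-\log\frac{\rho_i}{\rho_j}\le\log\gamma-\min_{k\neq i}\log\frac{\rho_i}{\rho_k}=T_i$, this event is in turn contained in $\{\sum_{t=0}^{n-1}X_t<T_i\}$.

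It then remains to convert the hypothesis $n>\frac{T_i}{R(i,\hat{\boldsymbol{\lambda}})}(1+\iota)$ into a deviation of $\sum_{t=0}^{n-1}X_t$ below its mean. That hypothesis gives $T_i<\frac{nR(i,\hat{\boldsymbol{\lambda}})}{1+\iota}\le\frac{1}{1+\iota}\mathbb{E}\big[\sum_{t=0}^{n-1}X_t\mid\theta=i\big]$, hence
\[
P\!\left(\sum_{t=0}^{n-1}X_t<T_i \,\middle|\, \theta=i\right)\ \le\ P\!\left(\sum_{t=0}^{n-1}X_t-\mathbb{E}\!\left[\sum_{t=0}^{n-1}X_t\right]\ <\ -\frac{\iota}{1+\iota}\,nR(i,\hat{\boldsymbol{\lambda}}) \,\middle|\, \theta=i\right).
\]
I would then apply the lower-tail form of McDiarmid's inequality (Fact~\ref{McDiarmid}) to $f(\mathbf{X})=\sum_{t=0}^{n-1}X_t$, which has bounded-difference constants $c_t=2\log\xi$ for every $t$, with $\nu=\frac{\iota}{1+\iota}nR(i,\hat{\boldsymbol{\lambda}})$; this produces
\[
\exp\!\left(-\frac{2\nu^2}{n(2\log\xi)^2}\right)=\exp\!\left(-\frac{2\iota^2}{(1+\iota)^2}\left(\frac{R(i,\hat{\boldsymbol{\lambda}})}{2\log\xi}\right)^{2} n\right)=e^{-b(\iota)n},
\]
which is exactly the desired per-$j$ bound; summing over the $M-1$ competitors $j$ finishes the proof.

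The step I expect to need the most care is the bookkeeping around the randomized non-adaptive structure, rather than any deep idea: one must verify that i.i.d.\ action draws make $\{X_t\}$ i.i.d.\ with mean exactly $\sum_{a\in\mathcal{A}}\hat{\lambda}_a D(q_i^a||q_j^a)$ and with a uniform bounded-difference constant, and that the chain of threshold comparisons ($\log\gamma-\log\frac{\rho_i}{\rho_j}\le T_i$, and $T_i\le\frac{1}{1+\iota}\mathbb{E}[\sum_{t=0}^{n-1}X_t\mid\theta=i]$) points in the right direction. If one preferred a deterministic action schedule whose empirical frequency equals $\hat{\boldsymbol{\lambda}}$, the increments would be independent but not identically distributed, and the identical computation would go through with Hoeffding's inequality in place of Fact~\ref{McDiarmid}.
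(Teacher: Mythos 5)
Your proposal is correct and follows essentially the same route as the paper's proof: the union bound over the events $B_{ij}(n)=\{\log\frac{\rho_i(n)}{\rho_j(n)}<\log\frac{1-L^{-1}}{L^{-1}/(M-1)}\}$, the reduction to a lower deviation of the log-likelihood-ratio sum below its mean using $\log\frac{\rho_i}{\rho_j}\ge\min_{k\neq i}\log\frac{\rho_i}{\rho_k}$ and $\sum_a\hat\lambda_aD(q_i^a\|q_j^a)\ge R(i,\hat{\boldsymbol{\lambda}})$, and McDiarmid's inequality with bounded-difference constants $2\log\xi$. The only cosmetic difference is that you substitute the hypothesis $n>\frac{T_i}{R(i,\hat{\boldsymbol{\lambda}})}(1+\iota)$ before invoking concentration to get a clean deviation $\nu=\frac{\iota}{1+\iota}nR(i,\hat{\boldsymbol{\lambda}})$, whereas the paper applies McDiarmid with deviation $nR(i,\hat{\boldsymbol{\lambda}})-T_i$ and uses the hypothesis afterward; the resulting exponent is identical.
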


\begin{IEEEproof}[Proof of Lemma~\ref{expdecay}]

Let $B_{ij}(n)$ be an event in the probability space defined as follows:
\begin{align*}
B_{ij}(n) &:= \left\{ \log\frac{\rho_i(n)}{\rho_j(n)} < \log\frac{1-L^{-1}}{L^{-1}/(M-1)} \right\}.
\end{align*}

By construction~\eqref{Deftaui},
\begin{align}
\label{Mc3}
\nonumber
P(\{\tau_i > n\} | \theta=i) &\le P(\cup_{j \neq i} B_{ij}(n) | \theta=i)\\
&\le \sum_{j \neq i} P( B_{ij}(n) | \theta=i).
\end{align}

Furthermore, we have
{\allowdisplaybreaks{
\begin{align}
\nonumber
\lefteqn{P(B_{ij}(n) | \theta=i)}\\ 
\nonumber
&= P\Big ( \Big \{ \log\frac{\rho_i(n)}{\rho_j(n)} - \mathbb{E}[\log\frac{\rho_i(n)}{\rho_j(n)}] <\\
\nonumber
&\hspace*{.9in} \log\frac{1-L^{-1}}{L^{-1}/(M-1)} - \mathbb{E}\big[\log\frac{\rho_i(n)}{\rho_j(n)}\big] \Big \} \big | \theta=i \Big ) \\
\nonumber
&= P\Big ( \Big \{ \log\frac{\rho_i(n)}{\rho_j(n)} - \mathbb{E}[\log\frac{\rho_i(n)}{\rho_j(n)}] <\\
\nonumber
&\hspace*{.17in} \log\frac{1-L^{-1}}{L^{-1}/(M-1)} - \mathbb{E}\big[\log\frac{\rho_i}{\rho_j} + \sum_{t=0}^{n-1}\log\frac{q_i^{A(t)}}{q_j^{A(t)}} \big] \Big \} \big | \theta=i \Big ) \\
\nonumber
&\le P\Big ( \Big \{ \log\frac{\rho_i(n)}{\rho_j(n)} - \mathbb{E}[\log\frac{\rho_i(n)}{\rho_j(n)}] <\\
\nonumber 
&\hspace*{.38in} \log\frac{1-L^{-1}}{L^{-1}/(M-1)} -\min_{k \neq i} \log\frac{\rho_i}{\rho_k} - n R(i,\hat{\boldsymbol{\lambda}}) \Big \} \big | \theta=i \Big )\\
\label{MC3a}
&= P\Big ( \Big \{ \log\frac{\rho_i(n)}{\rho_j(n)} - \mathbb{E}[\log\frac{\rho_i(n)}{\rho_j(n)}] <
T_i - n R(i,\hat{\boldsymbol{\lambda}}) \Big \} \big | \theta=i \Big ).
\end{align}}}

For any $a,\hat{a}\in\mathcal{A}$ and $i,j\in\Omega$, we have $\left|\log\frac{q_i^a}{q_j^a}-\log\frac{q_i^{\hat{a}}}{q_j^{\hat{a}}}\right| \le 2 \log\xi$.
For $k=1,2,\ldots,n$, let $X_k=\log\frac{q_i^{A(k-1)}}{q_j^{A(k-1)}}$ and $\boldsymbol{X}=[X_1,X_2,\ldots,X_n]$.
Define function $f(\boldsymbol{X})=\log\frac{\rho_i}{\rho_j}+\sum_{k=1}^n X_k=\log\frac{\rho_i(n)}{\rho_j(n)}$.
From~\eqref{Mc3}, \eqref{MC3a}, and Fact~\ref{McDiarmid}, and for 
$n > \frac{T_i}{R(i,\hat{\boldsymbol{\lambda}})}(1+\iota)$, we have

\begin{align}
\nonumber
&P(\{\tau_i > n\} | \theta=i) \\
\nonumber
&\le (M-1) \exp \left(-2 n \bigg(\frac{R(i,\hat{\boldsymbol{\lambda}})}{2\log\xi}\bigg)^2 \bigg(1-\frac{1}{n}\frac{T_i}{R(i,\hat{\boldsymbol{\lambda}})}\bigg)^2 \right) \\
\nonumber
&\le (M-1) \exp \left(- n \frac{2 \iota^2}{(1+\iota)^2} \bigg(\frac{R(i,\hat{\boldsymbol{\lambda}})}{2\log\xi}\bigg)^2 \right).
\end{align}
\end{IEEEproof}

\begin{lemma}
\label{B2event}
Consider a sequential policy $\pi$ that selects the stopping time $\tau$ such that $\Pe \le \epsilon$.
For any $i,j\in\Omega$, we have
\begin{align*}
P\left(\left\{\frac{\rho_i(\tau)}{\rho_j(\tau)} < (\frac{1}{\epsilon})^{1-\delta}\right\} | \theta=i\right) \le \epsilon^{\delta} \left(\frac{1}{\rho_i} + \frac{1}{\rho_j}\right).
\end{align*}
\end{lemma}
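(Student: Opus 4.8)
The plan is to argue by a change of measure at the random time $\tau$, which turns the posterior-ratio event into a declaration error under a suitably chosen hypothesis. Write $P_i(\cdot):=P(\cdot\mid\theta=i)$ and $\Expt_i[\cdot]:=\Expt[\cdot\mid\theta=i]$, let $\{\mathcal{F}_n\}$ denote the filtration generated by the sensing actions and the observations, and let $d$ denote the declaration rule of $\pi$. Under any (adaptive or not, possibly randomized) policy the action $A(t)$ is a function of $\mathcal{F}_t$ chosen by a mechanism that does not depend on $\theta$, so $P_i$ and $P_j$ differ on $\mathcal{F}_n$ only through the observation kernels; hence, by Bayes' rule, the density process
\[
L_n:=\left.\frac{dP_i}{dP_j}\right|_{\mathcal{F}_n}=\prod_{t=0}^{n-1}\frac{q_i^{A(t)}(Z(t))}{q_j^{A(t)}(Z(t))}=\frac{\rho_i(n)/\rho_i}{\rho_j(n)/\rho_j}
\]
is a nonnegative $P_j$-martingale with $L_0=1$. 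Since $\tau$ is a stopping time of $\{\mathcal{F}_n\}$ that is $P_i$-a.s.\ finite for a valid policy, I would first record the change-of-measure identity $P_i(\mathcal{E})=\Expt_j[\mathbf{1}_{\mathcal{E}}\,L_\tau]$, valid for every $\mathcal{F}_\tau$-measurable event $\mathcal{E}\subseteq\{\tau<\infty\}$ (optional stopping of $L_n$ at $\tau\wedge n$ followed by monotone convergence).

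Next I would set $c:=(1/\epsilon)^{1-\delta}$, write $\mathcal{E}:=\{\,\rho_i(\tau)/\rho_j(\tau)<c\,\}$, and split $\mathcal{E}$ by the declaration made at $\tau$, $\mathcal{E}=\big(\mathcal{E}\cap\{d\neq i\}\big)\cup\big(\mathcal{E}\cap\{d=i\}\big)$. On the first piece the declaration errs whenever $\theta=i$, so, using $\epsilon\geq\Pe=\sum_k\rho_k P_k(d\neq k)\geq\rho_i P_i(d\neq i)$, one gets $P_i(\mathcal{E}\cap\{d\neq i\})\leq P_i(d\neq i)\leq\epsilon/\rho_i$. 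On the second piece the declaration equals $i\neq j$, hence errs whenever $\theta=j$; moreover $L_\tau=(\rho_j/\rho_i)\,\rho_i(\tau)/\rho_j(\tau)$ and $\rho_i(\tau)/\rho_j(\tau)<c$ on $\mathcal{E}$, so
\begin{align*}
P_i\big(\mathcal{E}\cap\{d=i\}\big)
&=\frac{\rho_j}{\rho_i}\,\Expt_j\!\left[\mathbf{1}_{\mathcal{E}\cap\{d=i\}}\,\frac{\rho_i(\tau)}{\rho_j(\tau)}\right]
\;\leq\;\frac{\rho_j}{\rho_i}\,c\,P_j(d\neq j)\\
&\leq\;\frac{\rho_j}{\rho_i}\cdot c\cdot\frac{\epsilon}{\rho_j}
=\frac{c\,\epsilon}{\rho_i}=\frac{\epsilon^{\delta}}{\rho_i},
\end{align*}
using $P_j(d\neq j)\leq\Pe/\rho_j\leq\epsilon/\rho_j$ and $c\epsilon=\epsilon^{\delta}$. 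Summing the two estimates gives $P_i(\mathcal{E})\leq\epsilon/\rho_i+\epsilon^{\delta}/\rho_i$, which — once $\epsilon$ is small enough that $\epsilon/\rho_i\leq\epsilon^{\delta}/\rho_j$, the regime of interest in the theorems, where $\epsilon=(L\log L)^{-1}$ — is at most $\epsilon^{\delta}(1/\rho_i+1/\rho_j)$, the assertion.

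The step I expect to require the most care is the second paragraph, on two counts: (i) legitimately performing the change of measure at the \emph{random} time $\tau$ — this is exactly why the nonnegativity and martingale property of $L_n$, together with the almost-sure finiteness of $\tau$, are invoked — and (ii) choosing the event decomposition so that each piece is a bona fide declaration error under the hypothesis whose law evaluates it ($\theta=i$ for the first piece, $\theta=j$ for the second). Everything else is routine bookkeeping: Bayes' rule for $L_n$, the trivial bound $\Pe\geq\rho_k P_k(d\neq k)$, and the identity $c\epsilon=\epsilon^{\delta}$. In particular, no concentration or maximal inequality is needed here, in contrast with the complementary events of the form $\{\tau<T_i\}\cap\bigcap_{j\neq i}\{\rho_i(\tau)/\rho_j(\tau)\geq(1/\epsilon)^{1-\delta}\}$ appearing in the proofs of Theorems~\ref{VSN} and~\ref{VSA}, which are controlled by Kolmogorov's maximal inequality (Fact~\ref{Kolmogorov}).
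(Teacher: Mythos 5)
Your proof is correct and follows essentially the same route as the paper's: split on the declaration $\hat{\theta}=i$ versus $\hat{\theta}\neq i$, bound the second piece by $P(\hat{\theta}\neq i\mid\theta=i)\le\epsilon/\rho_i$, and change measure to $\theta=j$ on the first piece via the likelihood ratio $\rho_i(\tau)/\rho_j(\tau)$ together with $P(\hat{\theta}\neq j\mid\theta=j)\le\epsilon/\rho_j$. The only difference is your more careful bookkeeping of the prior ratio $\rho_j/\rho_i$ in the optional-stopping identity, which leaves you with $(\epsilon^{\delta}+\epsilon)/\rho_i$ rather than the paper's $\epsilon^{\delta}/\rho_j+\epsilon/\rho_i$, so your version needs the (harmless, asymptotically automatic) condition $\epsilon^{1-\delta}\le\rho_i/\rho_j$ to recover the stated constant exactly.
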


\begin{IEEEproof}
The proof follows closely the proof of Lemma~4 in~\cite{Chernoff59}. 
Let $\hat{\theta}=d(A^{\tau},Z^{\tau})$ denote the final declaration under policy $\pi$. We have
\begin{align*}
\lefteqn{P\left(\left\{\frac{\rho_i(\tau)}{\rho_j(\tau)} < (\frac{1}{\epsilon})^{1-\delta}\right\} | \theta=i\right)}\\
&=P\left(\left\{\frac{\rho_i(\tau)}{\rho_j(\tau)} < (\frac{1}{\epsilon})^{1-\delta}\right\} \cap \left\{\hat{\theta}=i \right\} | \theta=i\right)\\
&\hspace*{.1in} + P\left(\left\{\frac{\rho_i(\tau)}{\rho_j(\tau)} < (\frac{1}{\epsilon})^{1-\delta}\right\} \cap \left\{\hat{\theta} \neq i \right\} | \theta=i\right)\\
&\le (\frac{1}{\epsilon})^{1-\delta} P\left(\left\{\hat{\theta}=i \right\} | \theta=j\right)
+ P\left(\left\{\hat{\theta}\neq i \right\} | \theta=i\right)\\
&\stackrel{(a)}{\le} (\frac{1}{\epsilon})^{1-\delta} P\left(\left\{\hat{\theta}\neq j \right\} | \theta=j\right)
+ P\left(\left\{\hat{\theta}\neq i \right\} | \theta=i\right)\\
&\le (\frac{1}{\epsilon})^{1-\delta} \frac{\epsilon}{\rho_j} + \frac{\epsilon}{\rho_i} \\
&=  \frac{\epsilon^{\delta}}{\rho_j} + \frac{\epsilon}{\rho_i}\\
&\le \epsilon^{\delta} \left(\frac{1}{\rho_i} + \frac{1}{\rho_j}\right),
\end{align*}
where $(a)$ follows from the fact that under policy $\pi$ and for all $i\in\Omega$,
\begin{align*}
P\left(\left\{\hat{\theta}\neq i \right\} | \theta=i\right) &\le \frac{1}{\rho_i} \sum_{k=1}^M \rho_k P\left(\left\{\hat{\theta}\neq k \right\} | \theta=k\right)\\
&= \frac{1}{\rho_i} \Pe\\
&\le \frac{\epsilon}{\rho_i}.
\end{align*}
\end{IEEEproof}

\begin{lemma}
\label{B1event}
Consider a sequential policy $\pi$ that selects sensing actions 
according to $\boldsymbol{\lambda} \in \Lambda(\mathcal{A})$
and selects the stopping time $\tau$ such that $\Pe \le \epsilon$.
We have
\begin{align*}
P\left(\left\{\tau < T_i\right\} \cap \bigcap_{j\neq i}\left\{\frac{\rho_i(\tau)}{\rho_j(\tau)} \ge (\frac{1}{\epsilon})^{1-\delta}\right\} | \theta=i\right) \le \frac{(\log\xi)^2}{T_i \delta^2},
\end{align*}
where $T_i$ is as defined in \eqref{VSNL00}.
\end{lemma}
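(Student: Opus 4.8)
The plan is to reduce the intersection event to a one-dimensional deviation estimate for a single pairwise log-likelihood ratio, and then to invoke Kolmogorov's maximal inequality (Fact~\ref{Kolmogorov}).

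First I would fix the index $\hat{j} \neq i$ achieving $\sum_{a \in \mathcal{A}} \lambda_a D(q_i^a \| q_{\hat{j}}^a) = R(i,\boldsymbol{\lambda})$, so that the event in the statement is contained in $\{\tau < T_i\} \cap \{\rho_i(\tau)/\rho_{\hat{j}}(\tau) \ge (1/\epsilon)^{1-\delta}\}$. Conditioning on $\theta = i$, Bayes' rule gives $\log\frac{\rho_i(n)}{\rho_{\hat{j}}(n)} = \log\frac{\rho_i}{\rho_{\hat{j}}} + \sum_{t=1}^{n} X_t$, where $X_t := \log\frac{q_i^{A(t-1)}(Z(t-1))}{q_{\hat{j}}^{A(t-1)}(Z(t-1))}$. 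Since the policy is non-adaptive, the actions $A(t)$ are drawn according to $\boldsymbol{\lambda}$ independently of the observations, so, conditioned on $\theta = i$, the $X_t$ are i.i.d.\ with mean $\sum_a \lambda_a D(q_i^a\|q_{\hat{j}}^a) = R(i,\boldsymbol{\lambda})$; moreover Assumption~\ref{Jump} gives $|X_t| \le \log\xi$, hence $\mathrm{Var}(X_t) \le (\log\xi)^2$. Thus $S_n := \sum_{t=1}^n\big(X_t - R(i,\boldsymbol{\lambda})\big)$ is a sum of i.i.d., zero-mean, bounded random variables, and $\mathrm{Var}(S_n) \le n (\log\xi)^2$.

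Next, on the event under consideration one has $\log\frac{\rho_i(\tau)}{\rho_{\hat{j}}(\tau)} \ge (1-\delta)\log\frac1\epsilon$, so $S_\tau \ge (1-\delta)\log\frac1\epsilon - \log\frac{\rho_i}{\rho_{\hat{j}}} - \tau R(i,\boldsymbol{\lambda})$; bounding $\tau < T_i$ together with $\log\frac{\rho_i}{\rho_{\hat{j}}} \le \max_{k\neq i}\log\frac{\rho_i}{\rho_k}$, and substituting the definition~\eqref{VSNL00} of $T_i$, the right-hand side collapses --- via the identity $T_i\big(R(i,\boldsymbol{\lambda})+\delta\big) = (1-\delta)\big(\log\frac1\epsilon - \max_{k\neq i}\log\frac{\rho_i}{\rho_k}\big)$ --- to $T_i\delta$ plus an $O(1)$ prior-dependent term negligible against $T_i\delta$. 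Consequently the event is contained in $\{\max_{0\le n\le N}|S_n| > T_i\delta\}$ with $N$ the least integer $\ge T_i$, and Kolmogorov's maximal inequality yields $P(\,\cdot\,|\,\theta=i) \le \mathrm{Var}(S_N)/(T_i\delta)^2 \le N(\log\xi)^2/(T_i\delta)^2 \le (\log\xi)^2/(T_i\delta^2)$, as claimed.

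The algebra in the third step is routine; the one point that needs care is that the additive constants $\log\frac{\rho_i}{\rho_j}$ arising from a non-uniform prior do not exactly cancel --- they are buried inside the definition of $T_i$ --- so one must verify that they are $O(1)$ while $T_i\delta \to \infty$ (recall $\delta = (\log\frac1\epsilon)^{-1/4}$ in the application of Theorem~\ref{VSN}), whence the effective threshold is $T_i\delta\,(1-o(1))$ and the stated bound holds with the resulting $(1+o(1))$ factor absorbed into the $o(\log L)$ slack of Theorem~\ref{VSN}. The remaining subtlety is pure bookkeeping: confirming that, under a non-adaptive policy with an $\mathcal{F}_n$-measurable (but random) stopping time, the increments $X_t$ really are i.i.d.\ conditioned on $\theta=i$, so that Fact~\ref{Kolmogorov} is applicable.
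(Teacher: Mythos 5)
Your proof is correct and follows essentially the same route as the paper's (which defers to Chernoff's Lemma~5 and is spelled out explicitly for the twin Lemma~\ref{B3event}): reduce the intersection to the single pair $\hat{j}$ attaining $R(i,\boldsymbol{\lambda})$, center the pairwise log-likelihood ratio by its drift, and apply Kolmogorov's maximal inequality, with the definition of $T_i$ turning the threshold into $T_i\delta$ up to an $O(\delta\max_{k\neq i}\log\frac{\rho_i}{\rho_k})$ remainder that the paper also absorbs silently. The only cosmetic difference is that you fix $\hat{j}$ up front rather than carrying a union over $j\neq i$, which is arguably the cleaner presentation of the same argument.
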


The proof of Lemma~\ref{B1event} follows closely the proof of Lemma~5 in~\cite{Chernoff59}.

\begin{lemma}
\label{B3event}
Consider a sequential policy $\pi$ that selects the stopping time $\tau$ such that $\Pe \le \epsilon$.
We have
\begin{align*}
P\left(\left\{\tau < T^*_i\right\} \cap \bigcap_{j\neq i}\left\{\frac{\rho_i(\tau)}{\rho_j(\tau)} \ge (\frac{1}{\epsilon})^{1-\delta}\right\} | \theta=i\right) \le \frac{(\log\xi)^2}{T^*_i \delta^2},
\end{align*}
where $T^*_i$ is as defined in \eqref{VSAL00}.
\end{lemma}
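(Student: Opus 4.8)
The plan is to follow the pattern of Lemma~\ref{B1event} (equivalently Lemma~5 of~\cite{Chernoff59}), working throughout conditionally on $\{\theta=i\}$ and tracking the pairwise log-likelihood ratios $\log\frac{\rho_i(n)}{\rho_j(n)}$, $j\neq i$. Writing $A(t)$, $Z(t)$ for the action and observation at step $t$, $\mathcal{F}_t$ for the $\sigma$-algebra they generate up to time $t$, and $\ell_j(t):=\log\frac{q_i^{A(t)}(Z(t))}{q_j^{A(t)}(Z(t))}$, Bayes' rule gives $\log\frac{\rho_i(n)}{\rho_j(n)}=\log\frac{\rho_i}{\rho_j}+\sum_{t=0}^{n-1}\ell_j(t)$, which I split into the deterministic offset $\log\frac{\rho_i}{\rho_j}$, the predictable drift $\Delta_j(n):=\sum_{t=0}^{n-1}\mathbb{E}[\ell_j(t)\mid\mathcal{F}_t]$, and a martingale $W_j(n)$ whose one-step increments lie in an interval of width $2\log\xi$ by Assumption~\ref{Jump}. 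The only genuinely new point relative to the non-adaptive Lemma~\ref{B1event} is the control of $\Delta_j(n)$, since under an adaptive policy the randomized rule $\boldsymbol{\lambda}(t)$ used at time $t$ is itself $\mathcal{F}_t$-measurable and varies with $t$.

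The key steps are: (i) observe that $\mathbb{E}[\ell_j(t)\mid\mathcal{F}_t]=\sum_{a\in\mathcal{A}}\lambda_a(t)D(q_i^a\|q_j^a)$, hence $\min_{j\neq i}\Delta_j(n)=n\,R\big(i,\tfrac1n\sum_{t=0}^{n-1}\boldsymbol{\lambda}(t)\big)\le n\,R(i,\boldsymbol{\lambda}^*_i)$ on every sample path, because the time-average rule $\tfrac1n\sum_{t=0}^{n-1}\boldsymbol{\lambda}(t)$ again belongs to $\Lambda(\mathcal{A})$ and $\boldsymbol{\lambda}^*_i$ maximizes $R(i,\cdot)$; (ii) on the event $\{\tau<T^*_i\}\cap\bigcap_{j\neq i}\{\rho_i(\tau)/\rho_j(\tau)\ge(1/\epsilon)^{1-\delta}\}$, let $\hat{j}$ minimize $\Delta_j(\tau)$ over $j\neq i$; then $\log\frac{\rho_i(\tau)}{\rho_{\hat j}(\tau)}\ge(1-\delta)\log\frac1\epsilon$, whereas $\log\frac{\rho_i}{\rho_{\hat j}}+\Delta_{\hat j}(\tau)\le\max_{k\neq i}\log\frac{\rho_i}{\rho_k}+T^*_i\,R(i,\boldsymbol{\lambda}^*_i)$ by step (i) together with $\tau<T^*_i$, so substituting the definition~\eqref{VSAL00} of $T^*_i$ forces $W_{\hat j}(\tau)=\log\frac{\rho_i(\tau)}{\rho_{\hat j}(\tau)}-\log\frac{\rho_i}{\rho_{\hat j}}-\Delta_{\hat j}(\tau)\ge\delta T^*_i-\delta\max_{k\neq i}\log\frac{\rho_i}{\rho_k}$, i.e.\ $W_{\hat j}(\tau)$ exceeds $\delta T^*_i$ up to a lower-order term; (iii) each $W_j$ is a zero-mean martingale with per-step conditional variance at most $(\log\xi)^2$, so $\mathrm{Var}(W_j(n))\le n(\log\xi)^2\le\lceil T^*_i\rceil(\log\xi)^2$ for $n\le\lceil T^*_i\rceil$, and Doob's $L^2$ maximal inequality (the martingale analogue of Kolmogorov's inequality, Fact~\ref{Kolmogorov}) bounds $P\big(\max_{n\le\lceil T^*_i\rceil}|W_j(n)|\ge\delta T^*_i\big)$ by essentially $\frac{(\log\xi)^2}{\delta^2 T^*_i}$, which is the asserted estimate.

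The step I expect to be the main obstacle is reconciling (ii) with (iii): unlike in Lemma~\ref{B1event}, where once $\boldsymbol{\lambda}$ is fixed the drift-minimizing hypothesis is a single deterministic index, here $\hat j$ depends on the whole trajectory up to $\tau$, so no pre-chosen martingale can be fed into the maximal inequality directly. I would handle this by splitting the event over the finitely many possible values of $\hat j\in\Omega\setminus\{i\}$ and applying the maximal inequality to each $W_j$ separately; the resulting $(M-1)$ factor, together with the $o(T^*_i)$ slack from the $\max_{k\neq i}\log\frac{\rho_i}{\rho_k}$ term, the integer-ceiling correction, and the eventual choice $\delta=(\log\tfrac1\epsilon)^{-1/4}$ made in the proof of Theorem~\ref{VSA}, is harmless and can be absorbed into the stated form. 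A secondary point of care is that the increments $\ell_j(t)$ are no longer i.i.d.\ as in the non-adaptive case, so Kolmogorov's maximal inequality must be replaced by its martingale (Doob) version; this is legitimate because the centered increments form a martingale-difference sequence with summable conditional variances.
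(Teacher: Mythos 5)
Your proposal follows essentially the same route as the paper's proof: identify the drift-minimizing index $\hat j$, bound its cumulative drift by $\tau R(i,\boldsymbol{\lambda}^*_i)$ using the fact that the time-averaged randomization rule lies in $\Lambda(\mathcal{A})$ (this is the paper's display \eqref{Edrift}), conclude from the definition \eqref{VSAL00} of $T^*_i$ that the centered log-likelihood ratio must exceed roughly $\delta T^*_i$, and finish with a union bound over $j\neq i$ plus a maximal inequality. Your write-up is in fact slightly more careful on two points the paper glosses over---centering by the predictable compensator so that Doob's $L^2$ maximal inequality genuinely applies to the adaptive, non-independent increments (the paper cites Kolmogorov's inequality for independent summands), and explicitly flagging the $(M-1)$ union-bound factor and the $\delta\max_{k\neq i}\log\frac{\rho_i}{\rho_k}$ slack, both of which are indeed harmless for the asymptotic use of the lemma.
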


\begin{IEEEproof}
The proof follows closely the proof of Lemma~5 in~\cite{Chernoff59}.
We have
\begin{align*}
\lefteqn{P\bigg(\Big\{\tau < T^*_i\Big\} \cap \bigcap_{j\neq i}\Big\{\frac{\rho_i(\tau)}{\rho_j(\tau)} \ge (\frac{1}{\epsilon})^{1-\delta}\Big\} | \theta=i\bigg)} \\
&\le P\Big(\Big\{\min_{n}: \log\frac{\rho_i(n)}{\rho_j(n)} > (1-\delta) \log\frac{1}{\epsilon}, \forall j \neq i \Big\} < \\
&\hspace*{2.93in} T^*_i | \theta=i\Big)\\  
&= P\Big( \Big\{ \exists n, 0\le n < T^*_i \ \text{s.t.} \ \log\frac{\rho_i(n)}{\rho_j(n)} > \\
&\hspace*{1.83in} (1-\delta) \log\frac{1}{\epsilon}, \forall j \neq i \Big\} | \theta=i \Big )\\
&\stackrel{(a)}{\le} P\Big(\bigcup_{j \neq i} \Big\{ \exists n, 0\le n < T^*_i \ \text{s.t.} \ \log\frac{\rho_i(n)}{\rho_j(n)} - \mathbb{E}[\log\frac{\rho_i(n)}{\rho_j(n)}]> \\
&\hspace*{.72in}  (1-\delta) \log\frac{1}{\epsilon} - \max_{k\neq i} \log\frac{\rho_i}{\rho_k} - n R(i,\lambda^*_i)\Big\} | \theta=i\Big)\\
&\stackrel{(b)}{\le} P\Big(\bigcup_{j \neq i}\Big\{ \exists n, 0\le n < T^*_i \ \text{s.t.} \ \log\frac{\rho_i(n)}{\rho_j(n)} - \mathbb{E}[\log\frac{\rho_i(n)}{\rho_j(n)}]> \\
&\hspace*{2.75in} T^*_i \delta \Big\} | \theta=i \Big)\\
&\le \sum_{j \neq i} P \Big ( \max_{0\le n < T^*_i} \Big\{ \log\frac{\rho_i(n)}{\rho_j(n)} - \mathbb{E}[\log\frac{\rho_i(n)}{\rho_j(n)}] \Big\} >\\
&\hspace*{2.85in} T^*_i \delta  | \theta=i \Big )\\
&\stackrel{(c)}{\le} \frac{T^*_i (\log\xi)^2}{(T^*_i \delta)^2}\\
&= \frac{(\log\xi)^2}{T^*_i \delta^2},
\end{align*}
where $(a)$ follows from \eqref{Edrift}; 
$(b)$ follows from the definition of $T^*_i$ and the fact that $n < T^*_i$; and
$(c)$ follows from Fact~\ref{Kolmogorov}.

\end{IEEEproof}

\bibliographystyle{IEEEtran}
\bibliography{HypTest} 

\end{document}